\documentclass[10pt,a4paper]{article}
%%%%%%%%%%%%%%%%%%%%%%%%%%%%%%%%%%%%%%%%%%%%%%%%%%%%%%%%%%%%%%%%%%%%%%%%%%%%%%%%%%%%%%%%%%%%%%%%%%%%%%%%%%%%%%%%%%%%%%%%%%%%
\usepackage{amsfonts}
\usepackage[dvips]{graphicx}
\usepackage{amsmath}
\usepackage{mathrsfs}
\usepackage{makeidx}
\usepackage{xypic}
\usepackage[nottoc]{tocbibind}
\usepackage{pstricks}
\usepackage{bbm}
\usepackage[arrow, matrix, curve]{xy}
\usepackage{amsthm}
\usepackage{amssymb}
\usepackage{epic}
\usepackage{eepic}
\usepackage{times}
\usepackage{epsfig}

\setcounter{MaxMatrixCols}{10}
%TCIDATA{OutputFilter=LATEX.DLL}
%TCIDATA{Version=4.00.0.2312}
%TCIDATA{LastRevised=Sunday, March 18, 2012 11:15:39}
%TCIDATA{<META NAME="GraphicsSave" CONTENT="32">}
%TCIDATA{Language=American English}

\xymatrixcolsep{0.5cm}
\xymatrixrowsep{0.5cm}
\newtheorem{theorem}{Theorem}[section]
\newtheorem{proposition}{Proposition}[section]
\newtheorem{lemma}{Lemma}[section]

\newtheorem{corollary}{Corollary}[section]

\newcommand{\beqa}{\begin{eqnarray}}
\newcommand{\eeqa}{\end{eqnarray}}
\newcommand{\rf}[1]{(\ref{#1})}

\newcommand{\la}{\lambda}

\textheight 233mm
\textwidth 178mm
\oddsidemargin -8.2mm
\evensidemargin -8.2mm
\topmargin 6mm
\headheight 2pt \headsep 4mm
\sloppy
\flushbottom
\parindent0em
\parskip1ex
\leftmargini 2em
\leftmarginv .5em
\leftmarginvi .5em
\textfloatsep 3mm
\makeindex
\numberwithin{equation}{section}

\begin{document}

\begin{flushright}
YITP-SB-12-20
\end{flushright}

\bigskip

\bigskip

\begin{center}
\textbf{\Large Form factors and complete spectrum of XXX antiperiodic higher spin chains by quantum separation of variables}

\vspace{50pt}

{\large G.~Niccoli\footnote{{\large {\small YITP, Stony Brook University,
New York 11794-3840, USA, niccoli@max2.physics.sunysb.edu}}}}

\vspace{50pt}

\vspace{80pt}
\end{center}

\begin{itemize}
\item[] \textbf{{Abstract}}\,\,\,The antiperiodic
transfer matrix associated to higher spin representations of the rational
6-vertex Yang-Baxter algebra is analyzed by generalizing the approach
introduced recently in \cite{ARXFGMN12-SG}, for the cyclic representations, in 
\cite{ARXFN12-0}, for the spin-1/2 highest weight representations, and in \cite{ARXFN12-2}, for the spin 1/2 representations of the reflection algebra. Here, we
derive the complete characterization of the transfer matrix spectrum and we
prove its simplicity in the framework of Sklyanin's quantum separation of
variables (SOV). Then, the characterization of local operators by Sklyanin's
quantum separate variables and
the expression of the scalar products of \textit{separates states} by
determinant formulae allow to compute the form factors of the local spin
operators by one determinant formulae similar to the scalar product ones.
Finally, let us comment that these results represent the SOV analogous in the antiperiodic higher spin XXX quantum chains of the results obtained for
the periodic chains in \cite{ARXFCM07} in the framework of the algebraic Bethe
ansatz.
\end{itemize}

\newpage
\tableofcontents
\newpage

\section{Introduction}

The lattice quantum integrable models associated by the quantum inverse
scattering method (QISM) \cite{ARXFSF78}-\cite{ARXFIK82} to the representations of
the rational 6-vertex Yang-Baxter algebra on higher spin-$s$ quantum chains (%
$s$ any positive half-integer) with antiperiodic boundary conditions are analyzed. Under the homogeneous limit, such analysis allows to describe the
spin-$s$ XXX quantum chain with antiperiodic boundary conditions. Then, it
is worth recalling that under periodic boundary conditions\footnote{%
See \cite{ARXFZam80}-\cite{ARXFBab85}, \cite{ARXFKS82} and \cite{ARXFKRS} for the first
studies of the integrable spin-$s$ XXZ quantum spin chains.
The spin-1 Hamiltonian was first given in \cite{ARXFZam80}; Bethe ansatz
equations for the spin-$s$ models with periodic boundary conditions were
obtained and studied in \cite{ARXFSg84,ARXFBab85,ARXFKR}, see also \cite{ARXFAlz89}-\cite{ARXFKlp91}.} these quantum models have been investigated by using the algebraic
Bethe ansatz\footnote{%
See \cite{ARXFSF78}-\cite{ARXFFST80} and reference therein.} (ABA) and
results are known both for the spectrum characterization and for the correlation
functions \cite{ARXFK01,ARXFCM07}. 
In the papers \cite{ARXFMRM05,ARXFGM05}, the analysis of the spectrum under general toroidal boundary conditions has been developed thanks to the extension\footnote{In fact, the analysis developed in these papers also generalizes the (nested) ABA for the integrable quantum models associated to rational higher rank Yang-Baxter algebras for these general boundary conditions.} to these quantum models of the ABA method. It is also relevant to comment that similarly, by using the Baxter's gauge transformation technique \cite{ARXFBaxBook}, the spectrum of the open XXX quantum spin chains with general non-diagonal integrable boundary conditions has been analyzed by ABA in \cite{ARXFRMG03,ARXFRM05} and also by the functional version of Sklyanin's quantum separation of variables (SOV) \cite{ARXFSk1}-\cite{ARXFSk3} in \cite{ARXFFSW08,ARXFFGSW11}. 

Here, we show how to apply an approach based on Sklyanin's SOV to the special case of antiperiodic boundary conditions\footnote{Let us recall that in \cite{ARXFFram+2} has been developed the analysis by the functional SOV of the related but more general spin-boson model introduced and first analyzed by ABA in \cite{ARXFFram+1}.}, which allows to
achieve the complete characterization of the spectrum and the computation of the
matrix elements of local operators on the transfer matrix eigenstates. It is natural to consider the analysis here presented as the generalization to the SOV-framework of the Lyon group
method \cite{ARXFKitMT99}-\cite{ARXFKKMST07} implemented in the ABA framework\footnote{See \cite{ARXFKKMNST07}-\cite{ARXFKKMNST08} for the extension of this method and of the corresponding results to the open spin 1/2 quantum chains with diagonal
boundary conditions.}. This approach has been first developed\footnote{%
Let us comment that these papers use as required setup the series of papers 
\cite{ARXFNT-10}-\cite{ARXFGN12} where the complete spectrum has
been characterized for the lattice quantum sine-Gordon model and for the $%
\tau _{2}$-model and the chiral Potts model \cite{ARXFBS90}-\cite{ARXFTarasovSChP},
respectively.} in \cite{ARXFGMN12-SG,ARXFGMN12-T2} for the lattice quantum
sine-Gordon model \cite{ARXFFST80,ARXFIK82}\ and for the $\tau _{2}$-model\footnote{%
See the series of works \cite{ARXFGIPS06}-\cite{ARXFGIPS09} for previous analysis by
SOV method of the $\tau _{2}$-model.} \cite{ARXFBa04}. In particular, in \cite{ARXFGMN12-SG,ARXFGMN12-T2} the reconstruction of local operators by quantum
separate variables and one determinant formulae for the scalar products
of separate states\footnote{%
See Section \ref{ARXFSP} for the definition of these states in our current model.%
} have been derived and used to compute matrix elements of local operators
in determinant form. Further key quantum integrable models have been analyzed
by this approach getting the same type of universal results. In \cite{ARXFN12-0} the antiperiodic\footnote{Let us comment that previous results for this model under antiperiodic
boundary condition were based on the Baxter Q-operator \cite{ARXFBBOY95} and the
functional separation of variables of Sklyanin developed first in \cite{ARXFSk2} for the XXX spin chain and later in \cite{ARXFNWF09} for the XXZ spin chain.}
XXZ spin 1/2 quantum chain \cite{ARXFH28}-\cite{ARXFLM66} has been considered while for the spin 1/2
representations of the reflection algebra \cite{ARXFGau71}-\cite{ARXFGZ94} with
non-diagonal boundaries the SOV setup has been implemented in \cite{ARXFN12-2} and there also the
matrix elements of some string of local operators have been computed. 

Let us finally comment that our interest toward the SOV method of Sklyanin
is due to the fact that it allows to overcome several problems which affect
others methods like for example the coordinate Bethe ansatz \cite{ARXFBe31}, 
\cite{ARXFBaxBook} and \cite{ARXFABBBQ87}, the Baxter Q-operator method \cite{ARXFBaxBook}, the algebraic Bethe ansatz \cite{ARXFSF78}-\cite{ARXFFST80}, the analytic
Bethe ansatz \cite{ARXFRe83-1}-\cite{ARXFRe83-2}. Indeed, SOV applies for a large
class of integrable quantum models to which others Bethe ansatz methods do
not apply; both the eigenvalues and the eigenstates are constructed and
under simple conditions these spectrum characterizations are complete. Moreover, in all the models analyzed in the series of papers \cite{ARXFNT-10}-\cite{ARXFGN12}, \cite{ARXFN12-0}, \cite{ARXFN12-2} and \cite{ARXFN12-3}
in the SOV framework the non-degeneracy of the
transfer matrix spectrum has been proven.

\section{Antiperiodic 6-vertex models}

\subsection{Higher spin representations}

Let $S_{n}^{\pm }$ and $S_{n}^{z}$ be the generators of the $sl(2)$ algebra:%
\begin{equation}
\lbrack S^{z},S^{\pm }]=\pm S^{\pm },\text{ \ }[S^{+},S^{-}]=2S^{z},
\end{equation}%
and let R$^{(s_{n})}\simeq $ $\mathbb{C}^{2s_{n}+1}$\ be linear spaces
(local quantum spaces) of dimension $(2s_{n}+1)$ with $2s_{n}\in \mathbb{Z}%
^{>0}$. A spin-$s_{n}$ representation of the $sl(2)$ algebra is associated
to any linear space R$^{(s_{n})}$ by defining:%
\begin{equation}
S_{n}^{z}=\text{diag}(s_{n},s_{n}-1,\ldots ,-s_{n}),\text{ \ \ }%
S_{n}^{+}=\left( S_{n}^{-}\right) ^{t}=\left( 
\begin{array}{llll}
0 & x_{n}(1) &  &  \\ 
& \ddots & \ddots &  \\ 
&  & \ddots & x_{n}(2s_{n}) \\ 
&  &  & 0%
\end{array}%
\right) ,
\end{equation}%
where $x_{n}(j)\equiv \sqrt{j(2s_{n}+1-j)}$. Then, to each local quantum
space R$^{(s_{n})}$ is associated a so-called Lax operator:%
\begin{equation}
\mathsf{L}_{0n}^{(1/2,s_{n})}(\lambda )\equiv \left( 
\begin{array}{cc}
\lambda +\eta (1/2+S_{n}^{z}) & \eta S_{n}^{-} \\ 
\eta S_{n}^{+} & \lambda +\eta (1/2-S_{n}^{z})%
\end{array}%
\right) _{0}\in \text{End}(V_{0}^{(1/2)}\otimes V_{n}^{(s_{n})}),
\end{equation}%
which satisfies the Yang-Baxter equation:%
\begin{equation}
R_{12}(\lambda -\mu )\mathsf{L}_{1n}^{(1/2,s_{n})}(\lambda )\mathsf{L}%
_{2n}^{(1/2,s_{n})}(\mu )=\mathsf{L}_{2n}^{(1/2,s_{n})}(\mu )\mathsf{L}%
_{1n}^{(1/2,s_{n})}(\lambda )R_{12}(\lambda -\mu ),  \label{ARXFYBA}
\end{equation}%
w.r.t. the rational 6-vertex R-matrix:%
\begin{equation}
R_{12}(\lambda )=\mathsf{L}_{12}^{(1/2,1/2)}(\lambda )\equiv \left( 
\begin{array}{cccc}
\lambda +\eta & 0 & 0 & 0 \\ 
0 & \lambda & \eta & 0 \\ 
0 & \eta & \lambda & 0 \\ 
0 & 0 & 0 & \lambda +\eta%
\end{array}%
\right) .
\end{equation}%
Now, we are in the position to define the so-called monodromy matrix:%
\begin{equation}
\mathsf{M}_{0}^{(1/2)}(\lambda )\equiv \left( 
\begin{array}{cc}
\mathsf{A}(\lambda ) & \mathsf{B}(\lambda ) \\ 
\mathsf{C}(\lambda ) & \mathsf{D}(\lambda )%
\end{array}%
\right) \equiv \mathsf{L}_{0\mathsf{N}}^{(1/2,s_{\mathsf{N}})}(\lambda -\eta
_{\mathsf{N}})\cdots \mathsf{L}_{01}^{(1/2,s_{1})}(\lambda -\eta _{1})\in 
\text{End}(\text{R}_{0}^{(1/2)}\otimes _{n=1}^{\mathsf{N}}\text{R}%
_{n}^{(s_{n})}),
\end{equation}%
where the $\eta _{n}$ are the so-called inhomogeneities parameters. Then the
monodromy matrix $\mathsf{M}_{0}^{(1/2)}(\lambda )$ is itself solution of
the Yang-Baxter equation:%
\begin{equation}
R_{12}(\lambda -\mu )\mathsf{M}_{1}^{(1/2)}(\lambda )\mathsf{M}%
_{2}^{(1/2)}(\mu )=\mathsf{M}_{2}^{(1/2)}(\mu )\mathsf{M}_{1}^{(1/2)}(%
\lambda )R_{12}(\lambda -\mu ),
\end{equation}%
w.r.t. the rational 6-vertex R-matrix.

\subsubsection{Yang-Baxter algebra representations on vector and covector
spaces\label{ARXFYB-LR}}

Let $|k,n\rangle $ be a vector in R$_{n}^{(s_{n})}$ characterized by:%
\begin{equation}
S_{n}^{z}|k,n\rangle =(k-s_{n})|k,n\rangle ,\text{ \ }k\in
\{0,1,...,2s_{n}\},
\end{equation}%
i.e. the set of $|k,n\rangle $ defines a $S_{n}^{z}$-eigenbasis of the local
space R$_{n}^{(s_{n})}$, and let us denote with L$_{n}^{(s_{n})}$ the linear space
dual of R$_{n}^{(s_{n})}$ and with $\langle k,n|$ a covector defined by:%
\begin{equation}
\langle k,n|k^{\prime },n\rangle \equiv \delta _{k,k^{\prime }}\text{ \ \ }%
\forall k,k^{\prime }\in \{0,...,2s_{n}\},
\end{equation}%
i.e. the covectors $\langle k,n|$ define the $S_{n}^{z}$-eigenbasis in the
dual linear space L$_{n}$. We can naturally introduce a scalar
product w.r.t. the covector-vector basis by setting:
\begin{equation}
(|k,n\rangle ,|k^{\prime },n\rangle )\equiv \langle k,n|k^{\prime },n\rangle.
\end{equation}%
In the \textit{left} (covectors) and \textit{right} (vectors) linear spaces:%
\begin{equation}
\mathcal{L}_{\mathsf{N}}\equiv \otimes _{n=1}^{\mathsf{N}}\text{L}%
_{n}^{(s_{n})},\text{ \ \ \ \ }\mathcal{R}_{\mathsf{N}}\equiv \otimes
_{n=1}^{\mathsf{N}}\text{R}_{n}^{(s_{n})},
\end{equation}%
the representations of the local $sl(2)$ generators induce left and right
spin-$\{s_{1},...,s_{\mathsf{N}}\}$\ representations of dimension $d_{%
\mathsf{N}}\equiv \prod_{n=1}^{\mathsf{N}}(2s_{n}+1)$ with $\mathsf{N}$
inhomogeneities of the rational 6-vertex Yang-Baxter algebra; i.e. the quadratic algebra defined by the set of commutation relations of the monodromy matrix
elements $\mathsf{A}(\lambda )$, $\mathsf{B}(\lambda )$, $\mathsf{C}(\lambda
)$ and $\mathsf{D}(\lambda )$ encoded in the Yang-Baxter equation satisfied
by $\mathsf{M}_{0}^{(1/2)}(\lambda )$.

\subsubsection{Rational 6-vertex higher spin transfer matrices}

Let us remark that the rational 6-vertex $R$-matrix satisfies the following $%
GL(2,\mathbb{C})$ symmetry: 
\begin{equation}
R_{12}(\lambda )W_{1}\otimes W_{2}=W_{2}\otimes W_{1}R_{12}(\lambda ),
\label{ARXFGL2-Sym}
\end{equation}%
where $W$ is any invertible $2\times 2$ matrix. Then, we have that for any $%
W $ we can define a monodromy matrix:%
\begin{equation}
\mathsf{M}_{0}^{(1/2,W)}(\lambda )\equiv W_{0}\mathsf{M}_{0}^{(1/2)}(\lambda
),
\end{equation}%
which is a solution of the Yang-Baxter equations (\ref{ARXFYBA}) w.r.t. the same
rational 6-vertex $R$-matrix. This imply that the transfer matrix:%
\begin{equation}
\mathsf{T}^{(W)}(\lambda )=\text{tr}_{0}[W_{0}\mathsf{M}_{0}^{(1/2)}(\lambda
)],
\end{equation}%
defines a one-parameter family of commuting operators. Moreover, let us
recall that the quantum determinant:%
\begin{equation}
\det_{q}\mathsf{M}_{0}^{(1/2,W)}(\lambda )\,\equiv \det W\det_{q}\mathsf{M}%
_{0}^{(1/2)}(\lambda )  \label{ARXFq-det-f}
\end{equation}%
where%
\begin{equation}
\det_{q}\mathsf{M}_{0}^{(1/2,\sigma ^{x})}(\lambda )\,\equiv \,\mathsf{B}%
(\lambda )\mathsf{C}(\lambda /q)-\mathsf{A}(\lambda )\mathsf{D}(\lambda /q)
\end{equation}%
is a central element\footnote{%
See \cite{ARXFIK81} and \cite{ARXFIK09} for an historical note.} of the Yang-Baxter
algebra (\ref{ARXFYBA}) which explicit reads:%
\begin{equation}
\det_{q}\mathsf{M}_{0}^{(1/2,\sigma ^{x})}(\lambda )\,\equiv \det_{q}\mathsf{%
\bar{M}}^{(1/2)}(x)=a(\lambda )d(\lambda -\eta ),
\end{equation}%
where\footnote{%
Note that $a(\lambda )$ has zeros at $\lambda =\eta _{n}^{-}-s_{n}\eta $,
and $d(\lambda )$ has zeros at $\lambda =\eta _{n}^{-}+s_{n}\eta $ for all $%
j=1,\ldots,$$\mathsf{N}$.}:%
\begin{equation}
a(\lambda )=-\prod_{n=1}^{\mathsf{N}}\left( \lambda -\eta _{n}^{-}+s_{n}\eta
\right) ,\text{ \ \ \ \ \ }d(\lambda )=\prod_{n=1}^{\mathsf{N}}\left(
\lambda -\eta _{n}^{-}-s_{n}\eta \right) ,
\end{equation}%
and we have used the notation $\la^{\pm }\equiv \la\pm \eta /2$. 

In the following, we solve the spectral problem for the quantum integrable
models characterized in the framework of the quantum inverse scattering
method by the following antiperiodic transfer matrix:%
\begin{equation}
\mathsf{\bar{T}}(\lambda )\equiv \mathsf{B}(\lambda )+\mathsf{C}(\lambda )=%
\mathsf{T}^{(W=\sigma ^{x})}(\lambda ).
\end{equation}%

\begin{lemma}
\textsf{I)} If $\eta \in i\mathbb{R}$ and \{$%
\eta _{1},...,\eta _{\mathsf{N}}$\}$\in \mathbb{R}^{\mathsf{N}}$, $\mathsf{%
\bar{T}}(\lambda )$ is a one parameter family of normal operators and:%
\begin{equation}
i\mathsf{\bar{T}}(\lambda )
\end{equation}%
is self-adjoint for any $\lambda \in \mathbb{C}$ such that $\lambda -\lambda
^{\ast }+\eta =0$.

\textsf{II)} If $\eta \in \mathbb{R}$ and \{$\eta
_{1},...,\eta _{\mathsf{N}}$\}$\in \left( i\mathbb{R}\right) ^{\mathsf{N}}$, 
$\mathsf{\bar{T}}(\lambda )$ is a one parameter family of normal operators
and:%
\begin{equation}
i^{\mathsf{e}_{\mathsf{N}}}\mathsf{\bar{T}}(\lambda ),
\end{equation}%
\ \ \ where $\mathsf{e}_{\mathsf{N}}=\{1$ \ for \textsf{N\ \ }%
even,\thinspace $0$ \ for \textsf{N\ \ }odd$\}$, is self-adjoint for any $%
\lambda \in \mathbb{C}$ such that $\lambda +\lambda ^{\ast }+\eta =0$.
\end{lemma}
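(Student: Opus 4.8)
The plan is to reduce everything to the Hermitian-conjugation properties of a single Lax operator and then exploit the commutativity of the family $\{\mathsf{\bar{T}}(\lambda)\}=\{\mathsf{T}^{(W=\sigma^{x})}(\lambda)\}$ recorded above. First I would note that in the chosen representation $S_{n}^{z}$ is real diagonal and $S_{n}^{+}=(S_{n}^{-})^{t}$ with real, symmetric weights $x_{n}(j)=x_{n}(2s_{n}+1-j)$, so that with respect to the covector--vector inner product $(S_{n}^{z})^{\dagger}=S_{n}^{z}$ and $(S_{n}^{\pm})^{\dagger}=S_{n}^{\mp}$. Under the hypotheses of \textsf{I)} (so $\eta^{\ast}=-\eta$, $\eta_{n}^{\ast}=\eta_{n}$) I would then compute the quantum-space adjoint of the Lax matrix entry by entry and verify the identity $\mathsf{L}_{0n}^{(1/2,s_{n})}(\mu)^{\dagger}=\sigma_{0}^{y}\,\mathsf{L}_{0n}^{(1/2,s_{n})}(\mu^{\ast}-\eta)\,\sigma_{0}^{y}$: conjugation by $\sigma^{y}$ in the auxiliary space simultaneously interchanges the two diagonal entries and the two off-diagonal entries with the correct signs, while the shift $\mu\mapsto\mu^{\ast}-\eta$ absorbs $\eta^{\ast}=-\eta$.

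The one point needing care is the assembly of the monodromy matrix, and I expect it to be the only real subtlety. Because the entries of different factors $\mathsf{L}_{0n}$ act on distinct spaces $\text{R}_{n}^{(s_{n})}$ they commute, so the quantum-space conjugation does \emph{not} reverse the auxiliary-space order of the product: $(\mathsf{L}_{0\mathsf{N}}\cdots\mathsf{L}_{01})^{\dagger}=\mathsf{L}_{0\mathsf{N}}^{\dagger}\cdots\mathsf{L}_{01}^{\dagger}$ in the same order. Since $\eta_{n}\in\mathbb{R}$ gives $(\lambda-\eta_{n})^{\ast}-\eta=(\lambda^{\ast}-\eta)-\eta_{n}$, the per-site relation telescopes (the internal $\sigma^{y}\sigma^{y}$ cancel) to $\mathsf{M}_{0}^{(1/2)}(\lambda)^{\dagger}=\sigma_{0}^{y}\,\mathsf{M}_{0}^{(1/2)}(\lambda^{\ast}-\eta)\,\sigma_{0}^{y}$. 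Reading off matrix entries yields $\mathsf{B}(\lambda)^{\dagger}=-\mathsf{C}(\lambda^{\ast}-\eta)$ and $\mathsf{C}(\lambda)^{\dagger}=-\mathsf{B}(\lambda^{\ast}-\eta)$, hence $\mathsf{\bar{T}}(\lambda)^{\dagger}=-\mathsf{\bar{T}}(\lambda^{\ast}-\eta)$.

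Normality of \textsf{I)} is then immediate: the family $\{\mathsf{T}^{(W)}(\lambda)\}$ commutes, so $\mathsf{\bar{T}}(\lambda)$ commutes with $\mathsf{\bar{T}}(\lambda^{\ast}-\eta)=-\mathsf{\bar{T}}(\lambda)^{\dagger}$, giving $[\mathsf{\bar{T}}(\lambda),\mathsf{\bar{T}}(\lambda)^{\dagger}]=0$ for every $\lambda$. On the line $\lambda-\lambda^{\ast}+\eta=0$, i.e. $\lambda^{\ast}-\eta=\lambda$, the relation collapses to $\mathsf{\bar{T}}(\lambda)^{\dagger}=-\mathsf{\bar{T}}(\lambda)$, so $i\mathsf{\bar{T}}(\lambda)$ is self-adjoint.

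For \textsf{II)} I would deliberately avoid repeating the conjugation computation, because under its reality conditions the quantum adjoint sends each Lax operator to its transpose rather than to a similarity transform, and assembling these produces a reversed-order monodromy whose antiperiodic trace is not manifestly $\mathsf{\bar{T}}$ at a shifted point. Instead I would use that every entry of $\mathsf{L}_{0n}^{(1/2,s_{n})}(\lambda-\eta_{n})$ is homogeneous of degree one in $(\lambda,\eta,\eta_{n})$, whence the operator identity $\mathsf{\bar{T}}(i\lambda;i\eta,i\eta_{n})=i^{\mathsf{N}}\,\mathsf{\bar{T}}(\lambda;\eta,\eta_{n})$. Under the hypotheses of \textsf{II)} the rescaled parameters satisfy $i\eta\in i\mathbb{R}$ and $\{i\eta_{n}\}\in\mathbb{R}^{\mathsf{N}}$, which are exactly the hypotheses of \textsf{I)}; applying the already-proven \textsf{I)} to $\mathsf{\bar{T}}(i\lambda;i\eta,i\eta_{n})$ shows it is normal for all $\lambda$ (hence so is its scalar multiple $\mathsf{\bar{T}}(\lambda)$), and that $i\,\mathsf{\bar{T}}(i\lambda;i\eta,i\eta_{n})$ is self-adjoint exactly when $(i\lambda)-(i\lambda)^{\ast}+i\eta=0$, i.e. when $\lambda+\lambda^{\ast}+\eta=0$. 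Substituting the homogeneity identity, $i^{\mathsf{N}+1}\mathsf{\bar{T}}(\lambda)$ is self-adjoint on that line; since for $c=i^{\mathsf{N}+1}$ and $c'=i^{\mathsf{e}}$ the operator $c'\mathsf{\bar{T}}$ is self-adjoint iff $c'/c=i^{\mathsf{e}-\mathsf{N}-1}\in\mathbb{R}$, i.e. iff $\mathsf{e}\equiv\mathsf{N}+1\ (\mathrm{mod}\ 2)$, the correct phase is $\mathsf{e}=\mathsf{e}_{\mathsf{N}}$ ($1$ for $\mathsf{N}$ even, $0$ for $\mathsf{N}$ odd). This both establishes \textsf{II)} and accounts for the parity factor $i^{\mathsf{e}_{\mathsf{N}}}$.
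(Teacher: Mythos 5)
Your proposal is correct, and Part \textsf{I)} follows the paper's own route: direct verification of the per-site identity $\mathsf{L}_{0n}^{(1/2,s_{n})}(\lambda)^{\dagger}=\sigma_{0}^{y}\mathsf{L}_{0n}^{(1/2,s_{n})}(\lambda^{\ast}-\eta)\sigma_{0}^{y}$, telescoping to $\mathsf{M}(\lambda)^{\dagger}=\sigma_{0}^{y}\mathsf{M}(\lambda^{\ast}-\eta)\sigma_{0}^{y}$ (the paper's equation \rf{ARXFHerm-1}), normality from commutativity of the one-parameter family, and anti-self-adjointness on the line $\lambda^{\ast}-\eta=\lambda$. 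For Part \textsf{II)}, however, you take a genuinely different route. The paper simply repeats the direct computation with a second conjugation identity, $\mathsf{L}_{0n}^{(1/2,s_{n})}(\lambda)^{\dagger}=-\sigma_{0}^{y}\mathsf{L}_{0n}^{(1/2,s_{n})}(-(\lambda^{\ast}+\eta))\sigma_{0}^{y}$, valid for $\eta\in\mathbb{R}$ and $\eta_{n}\in i\mathbb{R}$; assembling the $\mathsf{N}$ per-site minus signs produces the $(-1)^{\mathsf{N}}$ in \rf{ARXFHerm-2}, from which the parity factor $i^{\mathsf{e}_{\mathsf{N}}}$ is read off. You instead exploit the degree-one homogeneity of the rational Lax matrix in $(\lambda,\eta,\eta_{n})$ to rescale all parameters by $i$, which maps the hypotheses of \textsf{II)} onto those of \textsf{I)}, and then track the phase $i^{\mathsf{N}+1}$; this is a clean reduction that makes the origin of $\mathsf{e}_{\mathsf{N}}$ transparent (it is $\mathsf{N}+1$ mod $2$) and avoids any second computation. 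One aside in your write-up is inaccurate, though it does not affect validity: you justify avoiding the direct route in \textsf{II)} by claiming the quantum adjoint there yields only an auxiliary-space transpose and no similarity transform, but in fact the rational Lax matrix's crossing-type symmetry converts that transpose into the $\sigma^{y}$-conjugation with negated, shifted argument displayed above, which is exactly what makes the paper's shorter (if less illuminating) argument go through.
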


\begin{proof}
\textsf{I)} $\mathsf{L}_{0n}^{(1/2,s_{n})}(\lambda )$ satisfies the
following Hermitian conjugation property:%
\begin{equation}
\mathsf{L}_{0n}^{(1/2,s_{n})}(\lambda )^{\dagger }\equiv \sigma _{0}^{y}%
\mathsf{L}_{0n}^{(1/2,s_{n})}(\lambda ^{\ast }-\eta )\sigma _{0}^{y},
\end{equation}%
as it can be verified by direct
computations. Here $\dagger $ implements the transposition on the local quantum space $n$
and the complex conjugation, then it holds:%
\begin{equation}
\mathsf{M}(\lambda )^{\dagger }\equiv \left( 
\begin{array}{cc}
\mathsf{A}^{\dagger }(\lambda ) & \mathsf{B}^{\dagger }(\lambda ) \\ 
\mathsf{C}^{\dagger }(\lambda ) & \mathsf{D}^{\dagger }(\lambda )%
\end{array}%
\right) =\left( 
\begin{array}{cc}
\mathsf{D}(\lambda ^{\ast }-\eta ) & -\mathsf{C}(\lambda ^{\ast }-\eta ) \\ 
-\mathsf{B}(\lambda ^{\ast }-\eta ) & \mathsf{A}(\lambda ^{\ast }-\eta )%
\end{array}%
\right) ,  \label{ARXFHerm-1}
\end{equation}%
for \{$\eta _{1},...,\eta _{\mathsf{N}}$\}$\in \mathbb{R}^{\mathsf{N}}$ and
so $\bar{\mathsf{T}}(\lambda )$ is normal for any $\lambda \in \mathbb{C}$
and the statements in \textsf{I)} follow.

\textsf{II)} It is simple to observe that:%
\begin{equation}
\mathsf{L}_{0n}^{(1/2,s_{n})}(\lambda )^{\dagger }\equiv -\sigma _{0}^{y}%
\mathsf{L}_{0n}^{(1/2,s_{n})}(-(\lambda ^{\ast }+\eta ))\sigma _{0}^{y},
\end{equation}%
then, it holds:%
\begin{equation}
\mathsf{M}(\lambda )^{\dagger }\equiv \left( 
\begin{array}{cc}
\mathsf{A}^{\dagger }(\lambda ) & \mathsf{B}^{\dagger }(\lambda ) \\ 
\mathsf{C}^{\dagger }(\lambda ) & \mathsf{D}^{\dagger }(\lambda )%
\end{array}%
\right) =\left( -1\right) ^{\mathsf{N}}\left( 
\begin{array}{cc}
\mathsf{D}(-(\lambda ^{\ast }+\eta )) & -\mathsf{C}(-(\lambda ^{\ast }+\eta
)) \\ 
-\mathsf{B}(-(\lambda ^{\ast }+\eta )) & \mathsf{A}(-(\lambda ^{\ast }+\eta
))%
\end{array}%
\right) ,  \label{ARXFHerm-2}
\end{equation}%
for $\{\eta _{1},...,\eta _{\mathsf{N}}\}\in \left( i\mathbb{R}\right) ^{%
\mathsf{N}}$ and so $\mathsf{\bar{T}}(\lambda )$ is normal for any $\lambda
\in \mathbb{C}$ and the statements in \textsf{II)} follow.
\end{proof}

\subsection{Antiperiodic 6-vertex quantum integrable higher spin chains}

\subsubsection{Fusion procedure for 6-vertex representations}

The fusion procedure was first developed in \cite{ARXFKRS} for the case of the
rational 6-vertex representations of the type analyzed here and later in 
\cite{ARXFKR} for the trigonometric ones. Our interest in the fusion procedure
is related to its use to reconstruct local operators in terms of the
Yang-Baxter algebra generators and the fused transfer matrix. This result
was first derived in the case of the periodic transfer matrix ($W=\mathbb{I}%
_{2\times 2}$) in \cite{ARXFMaiT00}; here we will extend that result in the case
of the antiperiodic $(W=\sigma ^{x})$\ transfer matrices.

The fusion procedure can be used to construct monodromy matrices with
auxiliary spaces of dimension higher than 2 starting from the one with
2-dimensional auxiliary space. We illustrate this
procedure in the antiperiodic case following a presentation similar to that of \cite{ARXFYB95}. Let us remark that the following
commutation relations hold:%
\begin{equation}
\lbrack \mathsf{L}_{0a}^{(1/2,s_{a})}(\lambda ),\sigma _{0}^{x}\otimes
\Sigma _{a}^{x}]=0,\text{ \ \ }\forall a\in \{1,...,\mathsf{N}\}
\label{ARXFSym-s-1/2}
\end{equation}%
where $\Sigma _{a}^{x}$ is the $(2s_{a}+1)\times (2s_{a}+1)$ matrix with
elements all zeros except those along the antidiagonal which are 1. The
above property is a consequence of the commutation relations:%
\begin{equation}
S_{a}^{\pm }\Sigma _{a}^{x}=\Sigma _{a}^{x}S_{a}^{\mp },\text{ \ \ }%
S_{a}^{z}\Sigma _{a}^{x}=-\Sigma _{a}^{x}S_{a}^{z}
\end{equation}%
where $S_{a}^{\pm }$ and $S_{a}^{z}$ are the generators of the spin-$%
(2s_{a}+1)$ representation of $sl(2)$. Form (\ref{ARXFSym-s-1/2}), we have also:%
\begin{equation}
\lbrack \mathsf{M}_{0}^{(1/2)}(\lambda ),\sigma _{0}^{x}\otimes \Sigma
^{x}]=0,\text{ \ \ }\forall a\in \{1,...,\mathsf{N}\}
\end{equation}%
where:%
\begin{equation}
\Sigma ^{x}\equiv \otimes _{a=1}^{\mathsf{N}}\Sigma _{a}^{x}.
\end{equation}%
Let us now define the antiperiodic monodromy matrices:%
\begin{equation}
\mathsf{\bar{M}}_{0}^{(1/2)}(\lambda )\equiv \sigma _{0}^{x}\mathsf{M}%
_{0}^{(1/2)}(\lambda )\in \text{End}(\text{R}_{0}^{(1/2)}\otimes \mathcal{R}%
_{\mathsf{N}}),\text{ \ }\mathsf{\bar{M}}_{a}^{(s_{a})}(\lambda )\equiv
\Sigma _{a}^{x}\mathsf{M}_{a}^{(s_{a})}(\lambda )\in \text{End}(\text{R}%
_{a}^{(s_{a})}\otimes \mathcal{R}_{\mathsf{N}}),
\end{equation}%
and denoted with R$_{\langle 12\rangle }^{(s)}$ and R$_{(12)}^{(s-1)}$ the
linear spaces of dimension $2s+1$ and $2s-1$ defined by the following
decomposition 
\begin{equation}
\text{R}_{1}^{(s-\frac{1}{2})}\otimes \text{R}_{2}^{(\frac{1}{2})}\simeq 
\text{R}_{\langle 12\rangle }^{(s)}\oplus \text{R}_{(12)}^{(s-1)},
\end{equation}%
of the tensor product of R$_{2}^{(\frac{1}{2})}$ and R$_{1}^{(s-\frac{1}{2}%
)} $, linear spaces of dimension $2$ and $2s$, respectively. This tensor
product decomposition allows to define the operator:%
\begin{equation}
P_{12}=P_{\langle 12\rangle }^{+}\oplus P_{(12)}^{-}\,,
\end{equation}%
as the direct sum of the projector $P_{\langle 12\rangle }^{+}$ in R$%
_{\langle 12\rangle }^{(s)}$ and the projector $P_{(12)}^{-}$ in R$%
_{(12)}^{(s-1)}$. Then the fusion of the monodromy matrices \ $\mathsf{\bar{M%
}}_{a}^{(s_{a})}(\lambda )$ has the same form as for the periodic case and
it reads:%
\begin{equation}
P_{12}\mathsf{\bar{M}}_{1}^{({1/2})}(\lambda ^{-}+s\eta )\mathsf{\bar{M}}%
_{2}^{(s-1/2)}(\lambda ^{-})P_{12}=\left( \!\!%
\begin{array}{cc}
\mathsf{\bar{M}}_{\langle 12\rangle }^{(s)}(\lambda ) & 0 \\* 
\ast & \mathsf{\bar{M}}_{(12)}^{(s-1)}(\lambda -\eta )\det_{q}\mathsf{\bar{M}%
}^{(1/2)}(\lambda +(s-1)\eta )%
\end{array}%
\!\!\right) ,  \label{ARXFfusion}
\end{equation}%
where now the quantum determinant of the antiperiodic monodromy matrix $%
\mathsf{\bar{M}}^{(1/2)}(\lambda )$ appears. Note that to any (higher)
monodromy matrix $\mathsf{\bar{M}}_{0}^{(s)}(\lambda )$ we can associate the
one-parameter family of higher transfer matrix:%
\begin{equation}
\mathsf{\bar{T}}^{(s)}(\lambda )=\text{tr}_{0}\mathsf{\bar{M}}%
_{0}^{(s)}(\lambda )\in \text{End}(\mathcal{R}_{\mathsf{N}})\text{ \ }%
\forall \lambda \in \mathbb{C},
\end{equation}%
which define commuting families of operators:%
\begin{equation}
\lbrack \mathsf{\bar{T}}^{(s_{1})}(\lambda ),\mathsf{\bar{T}}^{(s_{2})}(\mu
)]=0\text{ \ \ }\forall 2s_{1},2s_{2}\in \mathbb{Z}^{>0}.
\end{equation}%
The formula (\ref{ARXFfusion}) implies in particular the following recursion relations: 
\begin{equation}
\mathsf{\bar{T}}^{(s)}(\lambda )=\mathsf{\bar{T}}^{(1/2)}(\lambda ^{-}+s\eta
)\mathsf{\bar{T}}^{(s-1/2)}(\lambda ^{-})-\det_{q}\mathsf{\bar{M}}%
^{(1/2)}(\lambda +(s-1)\eta )\mathsf{\bar{T}}^{(s-1)}(\lambda -\eta ),
\label{ARXFTrans-fus}
\end{equation}%
for antiperiodic higher transfer matrices.

\subsubsection{Antiperiodic higher spin XXX quantum chains}

It is worth pointing out that the analysis of the antiperiodic
transfer matrix $\mathsf{\bar{T}}^{(1/2)}(\lambda )$ allows in particular to
describe the XXX higher spin-$s$ quantum chains in the special case of the
homogeneous limit ($\eta _{n}\rightarrow 0$) and under homogeneous spin-$s$
representations ($s_{n}=s$ for any $n\in \{1,...,\mathsf{N}\}$). Indeed, the
Hamiltonian of the XXX spin-$s$ quantum chain is obtain by logarithmic
derivative of the transfer matrix associated to the fundamental monodromy
matrix $\mathsf{\bar{M}}^{(s)}(\lambda )$:%
\begin{equation}
\left. H=\mathsf{\bar{T}}^{(s)}(\lambda )^{-1}\frac{d}{d\lambda }\mathsf{%
\bar{T}}^{(s)}(\lambda )\right\vert _{\lambda =0}, \label{ARXFham}
\end{equation}%
with the following boundary conditions:%
\begin{equation}
S_{\mathsf{N}+1}^{z}=-S_{1}^{z},\hspace{20pt}S_{\mathsf{N}+1}^{\pm
}=S_{1}^{\mp }.
\end{equation}%
Note that the transfer matrix $\mathsf{\bar{T}}^{(s)}(\lambda )$ used to
construct the XXX spin-$s$ Hamiltonian is obtained in terms of the transfer
matrix $\mathsf{\bar{T}}(\lambda )$ by using the recursion relations (\ref{ARXFTrans-fus}). This point together with the simplicity of the spectrum of the
transfer matrix $\mathsf{\bar{T}}(\lambda )$ (which we will show in the following) implies that it is enough to characterize the spectrum of this last
transfer matrix to have in particular the solution of the XXX spin-$s$
quantum chain.

\section{SOV-representations}

\label{ARXFSOV-Gen}A separation of variable (SOV) representation \cite{ARXFSk1,ARXFSk2,ARXFSk3} for the $\mathsf{\bar{T}}$-spectral problem is associated to a
representation for which the commutative family of operators $\mathsf{D}%
(\lambda )$ (or $\mathsf{A}(\lambda )$) is diagonal and with simple spectrum.

\begin{theorem}
If the inhomogeneities $\{\eta _{1},...,\eta _{\mathsf{N}}\}\in \mathbb{C}$ $%
^{\mathsf{N}}$ satisfy the conditions:
\begin{equation}
\eta _{a}\neq \eta _{b}\mathsf{ \,\,\, mod\,}\eta \,\,\,\,\,\forall a\neq b\in
\{1,...,\mathsf{N}\}  \label{ARXFE-SOV}
\end{equation}%
then $\mathsf{D}(\lambda )$\ and $\mathsf{A}(\lambda )$ are diagonalizable
and with simple spectrum and the $\mathsf{\bar{T}}$-spectral
problem admits separate variable representations.
\end{theorem}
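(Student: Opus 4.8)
The plan is to establish the claim in two stages: first proving that $\mathsf{D}(\lambda)$ (and by the symmetry $\Sigma^x$ also $\mathsf{A}(\lambda)$) is a polynomial family of mutually commuting operators whose explicit factorized form can be read off, and second showing that under the separation condition \rf{ARXFE-SOV} this family has simple joint spectrum, which is exactly the defining property of an SOV representation. First I would compute $\mathsf{D}(\lambda)$ explicitly from the product structure of the monodromy matrix $\mathsf{M}_0^{(1/2)}(\lambda)$. Since each Lax operator $\mathsf{L}_{0n}^{(1/2,s_n)}(\lambda)$ has lower-right entry $\lambda+\eta(1/2-S_n^z)$, the diagonal contribution to $\mathsf{D}(\lambda)$ contains the product $\prod_{n=1}^{\mathsf N}\bigl(\lambda-\eta_n+\eta(1/2-S_n^z)\bigr)$; one checks that $\mathsf{D}(\lambda)$ is a degree-$\mathsf N$ polynomial in $\lambda$ with leading coefficient $1$, and that its value at the shifted inhomogeneity points is governed purely by the action of $S_n^z$ on each local factor. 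The key structural fact is that the commutation relations of the Yang-Baxter algebra \rf{ARXFYBA} imply $[\mathsf{D}(\lambda),\mathsf{D}(\mu)]=0$ for all $\lambda,\mu$, so the family is simultaneously diagonalizable once we exhibit a basis of common eigenvectors.

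The heart of the argument is the interplay between $\mathsf{D}(\lambda)$ and the operators $\mathsf{B}(\lambda)$, $\mathsf{C}(\lambda)$, which act as shift operators on the $\mathsf{D}$-spectrum. From the Yang-Baxter relations one extracts the exchange relation of the schematic form $\mathsf{D}(\lambda)\mathsf{C}(\mu)=f(\lambda-\mu)\,\mathsf{C}(\mu)\mathsf{D}(\lambda)+g(\lambda-\mu)\,\mathsf{C}(\lambda)\mathsf{D}(\mu)$ with rational coefficients $f,g$ read off from the $R$-matrix entries $\lambda/(\lambda+\eta)$ and $\eta/(\lambda+\eta)$. These relations show that applying $\mathsf{C}$ (or $\mathsf{B}$) to a $\mathsf{D}$-eigenvector produces a new eigenvector with one of the eigenvalue data shifted by $\pm\eta$. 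The plan is then to build the full SOV eigenbasis by starting from a reference covector/vector and acting repeatedly: each local space $\mathrm{R}_n^{(s_n)}$ contributes a ladder of $2s_n+1$ distinct shifted values $\eta_n^- + (s_n - k_n)\eta$ for $k_n\in\{0,\dots,2s_n\}$, so a joint $\mathsf{D}$-eigenvector is labeled by the multi-index $(k_1,\dots,k_{\mathsf N})$, giving exactly $d_{\mathsf N}=\prod_n(2s_n+1)$ states. These span $\mathcal R_{\mathsf N}$, establishing diagonalizability.

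Simplicity of the spectrum is where condition \rf{ARXFE-SOV} enters decisively, and this is the step I expect to be the main obstacle. The eigenvalue of $\mathsf{D}(\lambda)$ on the state labeled $(k_1,\dots,k_{\mathsf N})$ is a monic degree-$\mathsf N$ polynomial whose roots are the numbers $\eta_n^- + (s_n-k_n)\eta$. Two different multi-indices give the same eigenvalue polynomial if and only if the two multisets of roots coincide. The condition $\eta_a\neq\eta_b\ \mathrm{mod}\ \eta$ for $a\neq b$ guarantees that the root contributed by site $a$ can never collide with a root contributed by a distinct site $b$: a collision would force $\eta_a-\eta_b\in\eta\mathbb Z$, which is excluded. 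Hence the only way two multisets of roots agree is if the multi-indices are equal site by site, so distinct states have distinct $\mathsf{D}$-eigenvalue polynomials, proving simplicity. The delicate point to handle carefully is ruling out \emph{accidental} coincidences arising from permuting roots across sites rather than within a site; this is precisely controlled by \rf{ARXFE-SOV} together with the fact that within a single site the $2s_n+1$ shifts are automatically distinct (being consecutive integer multiples of $\eta$). Once simplicity of $\mathsf{D}(\lambda)$ is shown, the identical statement for $\mathsf{A}(\lambda)$ follows by conjugating with $\Sigma^x$, since $\sigma_0^x\otimes\Sigma^x$ intertwines the roles of $\mathsf{A}$ and $\mathsf{D}$; and the existence of the separate-variable representation is then immediate, as the common $\mathsf{D}$-eigenbasis provides the SOV basis with the shift operators $\mathsf{B},\mathsf{C}$ realizing the conjugate variables.
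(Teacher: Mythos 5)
Your proposal follows essentially the same route as the paper: there the theorem is proven by explicitly constructing the $\mathsf{D}$-eigenbasis as ladders of $\mathsf{C}$ (resp.\ $\mathsf{B}$) operators evaluated at the points $\eta_n^{(k_n)}$ acting on the reference covector $\langle 0|$ (resp.\ vector $|0\rangle$), with eigenvalues $d_{\mathbf{h}}(\lambda)=\prod_{n}(\lambda-\eta_n^{(h_n)})$ whose distinctness for distinct multi-indices under condition \rf{ARXFE-SOV} yields simplicity --- exactly your counting-plus-root-collision argument. Your use of the $\Sigma^{x}$ conjugation to transfer the statement to $\mathsf{A}(\lambda)$ is a harmless variant of the paper's treatment, and your glossing over the standard algebraic Bethe ansatz computations (ladder truncation at $2s_n+1$ rungs, nonvanishing of the constructed states) matches the paper's own level of detail, which defers precisely these steps to \cite{ARXFF95} and to Theorem 3.2 of \cite{ARXFN12-0}.
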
In the next subsection we construct explicitly the $\mathsf{D}$-eigenbasis in this
way proving the Theorem \ref{ARXFE-SOV}.
\subsection{Construction of SOV-representation in $\mathsf{D}$-eigenbasis}
 Let%
\begin{equation}
\langle 0|\equiv \otimes _{n=1}^{\mathsf{N}}\langle 1,n|\text{ \ \ \ and \ \ 
}|0\rangle \equiv \otimes _{n=1}^{\mathsf{N}}|1,n\rangle ,
\end{equation}%
be the left (covector) and right (vector) \textit{references states}, where:%
\begin{equation}
\langle 1,n|=\left( 1,0,...,0\right) _{1,2s_{n}+1},\text{ \ \ }|1,n\rangle
=\left( 
\begin{array}{c}
1 \\ 
0 \\ 
\vdots \\ 
0%
\end{array}%
\right) _{2s_{n}+1,1},
\end{equation}%
then:

\begin{theorem}
\textsf{I)} \underline{Left $\mathsf{D}(\lambda )$ SOV-representations} \
If $\left( \ref{ARXFE-SOV}\right) $ are verified, the states $\langle $\textbf{h}$|\equiv \langle h_{1},...,h_{\mathsf{N}}|$, defined by:
\begin{equation}
\langle \text{\textbf{h}}|\equiv \frac{1}{\text{\textsc{n}}}\langle
0|\prod_{n=1}^{N}\prod_{k_{n}=0}^{h_{n}-1}\frac{C(\eta _{n}^{(k_{n})})}{%
d(\eta _{n}^{(k_{n}+1)})},  \label{ARXFD-left-eigenstates}
\end{equation}%
where%
\begin{equation}
\text{\textsc{n}}=\prod_{1\leq b<a\leq \mathsf{N}}(\eta _{a}^{(0)}-\eta
_{b}^{(0)})^{1/2},  \label{ARXFNorm-def}
\end{equation}%
$h_{n}\in \{0,...,2s_{n}\}$ for all the $n\in \{1,...,\mathsf{N}\}$ and:%
\begin{equation}
\eta _{n}^{(k_{n})}\equiv \eta _{n}^{-}+(s_{n}-k_{n})\eta ,
\end{equation}%
define a $\mathsf{D}%
$-eigenbasis of $\mathcal{L}_{\mathsf{N}}$:%
\begin{equation}
\langle \text{\textbf{h}}|\mathsf{D}(\lambda )=d_{\text{\textbf{h}}}(\lambda
)\langle \text{\textbf{h}}|,  \label{ARXFD-L-EigenV}
\end{equation}%
where:%
\begin{equation}
d_{\text{\textbf{h}}}(\lambda )\equiv \prod_{n=1}^{\mathsf{N}}(\lambda -\eta
_{n}^{(h_{n})})\text{ \ \ \ and \ \ \textbf{h}}\equiv (h_{1},...,h_{\mathsf{N%
}}).  \label{ARXFEigenValue-D}
\end{equation}%
Moreover it holds:%
\begin{eqnarray}
\langle \text{\textbf{h}}|\mathsf{C}(\lambda ) &=&\sum_{a=1}^{\mathsf{N}%
}\prod_{b\neq a}\frac{\lambda -\eta _{b}^{(h_{b})}}{\eta _{a}^{(h_{a})}-\eta
_{b}^{(h_{b})}}d(\eta _{a}^{(h_{a}+1-\beta _{h_{a}})})\langle \text{\textbf{h%
}}|\text{T}_{a}^{+},  \label{ARXFC-SOV_D-left} \\
&&  \notag \\
\langle \text{\textbf{h}}|\mathsf{B}(\lambda ) &=&\sum_{a=1}^{\mathsf{N}%
}\prod_{b\neq a}\frac{\lambda -\eta _{b}^{(h_{b})}}{\eta _{a}^{(h_{a})}-\eta
_{b}^{(h_{b})}}a(\eta _{a}^{(h_{a}-1+\alpha _{h_{a}})})\langle \text{\textbf{%
h}}|\text{T}_{a}^{-},  \label{ARXFB-SOV_D-left}
\end{eqnarray}%
where:%
\begin{equation}
\alpha _{h_{a}}\equiv (2s_{a}+1)\delta _{h_{a},0},\text{ \ \ }\beta
_{h_{a}}\equiv (2s_{a}+1)\delta _{h_{a},2s_{a}},\text{\ \ }\langle
h_{1},...,h_{a},...,h_{\mathsf{N}}|\text{T}_{a}^{\pm }=\langle
h_{1},...,h_{a}\pm 1,...,h_{\mathsf{N}}|.
\end{equation}%
Finally, $\mathsf{A}(\lambda )$ is uniquely defined by the quantum
determinant relation.\smallskip

\textsf{II)} \underline{Right $\mathsf{D}(\lambda )$ SOV-representations} \
If $\left( \ref{ARXFE-SOV}\right) $ are verified, the states $|$\textbf{h}$\rangle \equiv |h_{1},...,h_{\mathsf{N}}\rangle $, defined by:
\begin{equation}
|\text{\textbf{h}}\rangle \equiv \frac{1}{\text{\textsc{n}}}\prod_{n=1}^{%
\mathsf{N}}\prod_{k_{n}=0}^{h_{n}-1}\frac{B(\eta _{n}^{(k_{n})})}{a(\eta
_{n}^{(k_{n})})}|0\rangle ,  \label{ARXFD-right-eigenstates}
\end{equation}%
where\ $h_{n}\in \{0,...,2s_{n}\}$ for all the $n\in \{1,...,\mathsf{N}\}$, define a $\mathsf{D}%
$-eigenbasis of $\mathcal{R}_{\mathsf{N}}$:%
\begin{equation}
\mathsf{D}(\lambda )|\text{\textbf{h}}\rangle =d_{\text{\textbf{h}}}(\lambda
)|\text{\textbf{h}}\rangle .  \label{ARXFD-R-EigenV}
\end{equation}%
Moreover, it holds:%
\begin{eqnarray}
\mathsf{C}(\lambda )|\text{\textbf{h}}\rangle &=&\sum_{a=1}^{\mathsf{N}}%
\text{T}_{a}^{-}|\text{\textbf{h}}\rangle \prod_{b\neq a}\frac{\lambda -\eta
_{b}^{(h_{b})}}{\eta _{a}^{(h_{a})}-\eta _{b}^{(h_{b})}}d(\eta
_{a}^{(h_{a})}),  \label{ARXFC-SOV_D-right} \\
&&  \notag \\
\mathsf{B}(\lambda )|\text{\textbf{h}}\rangle &=&\sum_{a=1}^{\mathsf{N}}%
\text{T}_{a}^{+}|\text{\textbf{h}}\rangle \prod_{b\neq a}\frac{\lambda -\eta
_{b}^{(h_{b})}}{\eta _{a}^{(h_{a})}-\eta _{b}^{(h_{b})}}a(\eta
_{a}^{(h_{a})}),  \label{ARXFB-SOV_D-right}
\end{eqnarray}%
where:%
\begin{equation}
\text{T}_{a}^{\pm }|h_{1},...,h_{a},...,h_{\mathsf{N}}\rangle
=|h_{1},...,h_{a}\pm 1,...,h_{\mathsf{N}}\rangle .
\end{equation}%
Finally, $\mathsf{A}(\lambda )$ is uniquely defined by the quantum
determinant relation.
\end{theorem}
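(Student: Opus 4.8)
The plan is to reduce the whole statement to two inputs: the action of the monodromy entries on the reference covector $\langle 0|$, and the quadratic Yang--Baxter relations evaluated at the quantized spectral points $\eta_n^{(k_n)}$. First I would compute $\langle 0|\mathsf A,\langle 0|\mathsf B,\langle 0|\mathsf C,\langle 0|\mathsf D$ from the triangular action of the Lax matrices on the local highest--weight covectors $\langle 1,n|$. Since $\langle 1,n|\,\eta S_n^-=0$ (the first row of $S_n^-$ vanishes) while $S_n^z$ is diagonal, the weight grading — on covectors $\mathsf B$ raises and $\mathsf C$ lowers the total $S^z$, while $\mathsf A,\mathsf D$ preserve it — gives $\langle 0|\mathsf B(\lambda)=0$, $\langle 0|\mathsf D(\lambda)=d(\lambda)=d_{\mathbf 0}(\lambda)$ and $\langle 0|\mathsf A(\lambda)\propto a(\lambda)$, with $d(\lambda)=\prod_{n}(\lambda-\eta_n^{(0)})$. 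Thus $\langle 0|$ is already a $\mathsf D$--eigencovector whose $n$-th zero sits at $\eta_n^{(0)}$, and $\mathsf C$ acts as the raising operator on the quantum numbers $h_n$.

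The eigenvalue relation \rf{ARXFD-L-EigenV} then follows from a single Yang--Baxter relation which, on a $\mathsf D$--eigencovector $\langle\psi|$ of eigenvalue $\delta(\lambda)$, takes the effective form $\langle\psi|\mathsf C(\mu)\mathsf D(\lambda)=\frac{\lambda-\mu+\eta}{\lambda-\mu}\,\delta(\lambda)\,\langle\psi|\mathsf C(\mu)-\frac{\eta}{\lambda-\mu}\,\delta(\mu)\,\langle\psi|\mathsf C(\lambda)$. Setting $\mu$ equal to a zero $\mu_0$ of $\delta$ kills the second (``unwanted'') term, so $\langle\psi|\mathsf C(\mu_0)$ is again a $\mathsf D$--eigencovector with eigenvalue $\frac{\lambda-\mu_0+\eta}{\lambda-\mu_0}\delta(\lambda)$; since $\eta_n^{(k)}-\eta=\eta_n^{(k+1)}$, this moves the $n$-th zero from $\eta_n^{(k)}$ to $\eta_n^{(k+1)}$. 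As $[\mathsf C(\lambda),\mathsf C(\mu)]=0$ the sites decouple, so applying $\prod_{k_n=0}^{h_n-1}\mathsf C(\eta_n^{(k_n)})$ as in \rf{ARXFD-left-eigenstates} walks the $n$-th zero from $\eta_n^{(0)}$ to $\eta_n^{(h_n)}$ and yields exactly $d_{\mathbf h}(\lambda)$, the per-factor denominators $d(\eta_n^{(k_n+1)})$ being inserted only to normalize the states so that the $\mathsf C$-action below acquires the stated coefficients. That $\{\langle\mathbf h|\}$ is a basis of $\mathcal L_{\mathsf N}$ is then immediate: under \rf{ARXFE-SOV} the nodes $\eta_n^{(h_n)}$ are pairwise distinct across distinct $\mathbf h$, so the $d_{\mathbf h}$ are pairwise distinct polynomials, the eigencovectors are linearly independent, and their number $d_{\mathsf N}=\prod_n(2s_n+1)=\dim\mathcal L_{\mathsf N}$ is correct; this simultaneously establishes the diagonalizability and simple spectrum asserted in the preceding theorem.

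The substantive part, and the step I expect to be the main obstacle, is the off-diagonal action \rf{ARXFC-SOV_D-left}--\rf{ARXFB-SOV_D-left}. Here I would use that $\mathsf B(\lambda)$ and $\mathsf C(\lambda)$ are polynomials of degree $\mathsf N-1$ in $\lambda$ (one less than the diagonal entries, by the Lax structure), hence are fixed by Lagrange interpolation through their values at the $\mathsf N$ distinct nodes $\{\eta_a^{(h_a)}\}_{a=1}^{\mathsf N}$ — which reproduces precisely the prefactors $\prod_{b\neq a}(\lambda-\eta_b^{(h_b)})/(\eta_a^{(h_a)}-\eta_b^{(h_b)})$. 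It then remains to evaluate the node values $\langle\mathbf h|\mathsf C(\eta_a^{(h_a)})$ and $\langle\mathbf h|\mathsf B(\eta_a^{(h_a)})$. For $\mathsf C$ this is the zero-shifting mechanism above applied at $\mu_0=\eta_a^{(h_a)}$ (a zero of $d_{\mathbf h}$), giving a covector proportional to $\langle\mathbf h|\mathsf{T}_a^{+}$; reading off the normalization in \rf{ARXFD-left-eigenstates} fixes the coefficient to $d(\eta_a^{(h_a+1)})$. The delicate point is the truncation at the edges of each finite spin: for $h_a=2s_a$ the shifted state is out of range, but there $h_a+1-\beta_{h_a}=0$ forces the coefficient $d(\eta_a^{(0)})$, which vanishes because its $a$-th factor is $\eta_a^{(0)}-\eta_a^{(0)}=0$, so the spurious contribution is automatically annihilated — this is exactly the role of $\beta_{h_a}$, and dually of $\alpha_{h_a}$ (through the vanishing of $a(\eta_a^{(2s_a)})$) for $\mathsf B$ at $h_a=0$. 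The node value for $\mathsf B$ I would pin down analogously, its coefficient most cleanly fixed using the central quantum determinant \rf{ARXFq-det-f}; verifying these edge cancellations and the exact proportionality constants is the calculation-heavy step.

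Part II is the mirror image. The reference vector $|0\rangle=\otimes_n|1,n\rangle$ satisfies $\mathsf C(\lambda)|0\rangle=0$ and $\mathsf D(\lambda)|0\rangle=d(\lambda)|0\rangle$, with $\mathsf B$ now raising the $h_n$, so repeating the three steps with $\mathsf B\leftrightarrow\mathsf C$ gives \rf{ARXFD-R-EigenV}--\rf{ARXFB-SOV_D-right}; here the different normalization $1/a(\eta_n^{(k_n)})$ makes the edge truncations act directly through $d(\eta_a^{(0)})=0$ and $a(\eta_a^{(2s_a)})=0$, so no $\alpha,\beta$ shifts appear. Alternatively one may deduce Part II from Part I by the transposition symmetry of the Yang--Baxter algebra, or, under the reality conditions of the earlier Lemma, from the Hermitian-conjugation relation \rf{ARXFHerm-1}. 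In both parts $\mathsf A(\lambda)$ is finally reconstructed on the basis and shown to be uniquely determined via the quantum-determinant relation \rf{ARXFq-det-f}, $\det_q$ being central while $\mathsf D,\mathsf B,\mathsf C$ are already known.
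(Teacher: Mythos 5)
Your proposal is correct and takes essentially the same route as the paper, whose own proof consists of little more than noting that the reference states are $\mathsf{D}$-eigenstates and invoking the ``standard computations in algebraic Bethe ansatz'' of Theorem 3.2 of \cite{ARXFN12-0}: your zero-shifting argument via the $\mathsf{D}$--$\mathsf{C}$ (resp.\ $\mathsf{D}$--$\mathsf{B}$) commutation relations, the Lagrange interpolation of the degree-$(\mathsf{N}-1)$ operators $\mathsf{B}(\lambda),\mathsf{C}(\lambda)$ through the nodes $\eta_a^{(h_a)}$, and the edge truncations through $a(\eta_a^{(2s_a)})=d(\eta_a^{(0)})=0$ are precisely those computations made explicit. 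The one point to phrase more carefully is the left-edge case: the vanishing of $\langle\mathbf{h}|\mathsf{C}(\eta_a^{(2s_a)})$ for $h_a=2s_a$ should be deduced from the fact that it would otherwise be a $\mathsf{D}$-eigencovector with eigenvalue outside the simple spectrum already exhibited (or from the finite-dimensionality of the local representation), rather than from the vanishing of the coefficient $d(\eta_a^{(0)})$ in the very formula being proved --- a level of implicitness the paper itself also allows.
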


\begin{proof}
The proof of the theorem is based on Yang-Baxter commutation relations and
on the fact that the left and right references states are $\mathsf{D}%
$-eigenstates:%
\begin{equation}
\langle 0|\mathsf{A}(\lambda )=a(\lambda )\langle 0|,\text{ \ \ \ }\langle 0|%
\mathsf{D}(\lambda )=d(\lambda )\langle 0|,\text{ \ \ \ }\langle 0|\mathsf{B}%
(\lambda )=\text{\b{0}},\text{ \ \ \ }\langle 0|\mathsf{C}(\lambda )\neq 
\text{\b{0}},  \label{ARXFL_ref-E}
\end{equation}

and%
\begin{equation}
\mathsf{A}(\lambda )|0\rangle =a(\lambda )|0\rangle ,\text{ \ \ \ }\mathsf{D}%
(\lambda )|0\rangle =d(\lambda )|0\rangle ,\text{ \ \ \ }\mathsf{C}(\lambda
)|0\rangle =\text{\b{0}},\text{ \ \ \ }\mathsf{B}(\lambda )|0\rangle \neq 
\text{\b{0}}.
\end{equation}
Indeed, to prove that $\left( \ref{ARXFD-left-eigenstates}\right) $ and $\left( %
\ref{ARXFD-right-eigenstates}\right) $ are left and right eigenstates of $%
\mathsf{D}(\lambda)$ as stated in $\left( \ref{ARXFD-L-EigenV}\right) $ and $%
\left( \ref{ARXFD-R-EigenV}\right) $, we have just to repeat the standard
computations in algebraic Bethe ansatz \cite{ARXFF95} as done in the proof of Theorem 3.2 of \cite{ARXFN12-0}.
\end{proof}

Note that representations of the type $\left( \ref{ARXFC-SOV_D-left}\right) $-$%
\left( \ref{ARXFB-SOV_D-left}\right) $ and $\left( \ref{ARXFC-SOV_D-right}\right) $-$%
\left( \ref{ARXFB-SOV_D-right}\right) $ for the generators of the 6-vertex
Yang-Baxter algebra are also derived from the original representations by
the change of basis associated to the factorizing $F$-matrices \cite{ARXFMaiS00}%
. These matrices give explicit representations of the Drinfel'd's twist of
quasi-triangular quasi-Hopf algebras \cite{ARXFDr1}-\cite{ARXFDr3} and their
connection with Sklyanin's quantum separation of variables was pointed out
in \cite{ARXFT99} providing the factorizing $F$-matrices for general Yangian $%
Y(sl(2))$.

\subsection{SOV-decomposition of the identity}

The following proposition holds:

\begin{proposition}
Let $\langle $\textbf{h}$|$ be the generic left $\mathsf{D}$-eigenstate and $%
|$\textbf{k}$\rangle $ be the generic right $\mathsf{D}$-eigenstate, then
the action of the covector $\langle $\textbf{h}$|$ on the vector $|$\textbf{k%
}$\rangle $ reads:%
\begin{equation}
\langle \text{\textbf{h}}|\text{\textbf{k}}\rangle =\prod_{c=1}^{\mathsf{N}%
}\delta _{h_{c},k_{c}}\prod_{1\leq b<a\leq \mathsf{N}}\frac{1}{\eta
_{a}^{(h_{a})}-\eta _{b}^{(h_{b})}}.  \label{ARXFh|k}
\end{equation}
\end{proposition}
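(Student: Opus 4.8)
The plan is to prove \rf{ARXFh|k} by splitting it into the off-diagonal (orthogonality) factor $\prod_c\delta_{h_c,k_c}$ and the diagonal (normalization) factor, treating them by quite different arguments. For orthogonality I would use only that $\langle\text{\textbf{h}}|$ and $|\text{\textbf{k}}\rangle$ are left and right $\mathsf{D}(\lambda)$-eigenstates, with eigenvalues $d_{\text{\textbf{h}}}(\lambda)$ and $d_{\text{\textbf{k}}}(\lambda)$ by \rf{ARXFD-L-EigenV} and \rf{ARXFD-R-EigenV}. Evaluating $\langle\text{\textbf{h}}|\mathsf{D}(\lambda)|\text{\textbf{k}}\rangle$ by letting $\mathsf{D}(\lambda)$ act to the left and to the right gives $\big(d_{\text{\textbf{h}}}(\lambda)-d_{\text{\textbf{k}}}(\lambda)\big)\langle\text{\textbf{h}}|\text{\textbf{k}}\rangle=0$ for every $\lambda$. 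The roots of $d_{\text{\textbf{h}}}$ are the numbers $\eta_n^{(h_n)}=\eta_n^-+(s_n-h_n)\eta$, and under \rf{ARXFE-SOV} these are pairwise distinct as the pair $(n,h_n)$ varies: within a site they differ by nonzero multiples of $\eta$, and across sites $\eta_a^{(h_a)}-\eta_b^{(h_b)}=(\eta_a-\eta_b)+(s_a-h_a-s_b+h_b)\eta\neq0$ since $\eta_a\neq\eta_b\bmod\eta$. Hence $d_{\text{\textbf{h}}}$ determines $\text{\textbf{h}}$ uniquely, so $d_{\text{\textbf{h}}}\neq d_{\text{\textbf{k}}}$ whenever $\text{\textbf{h}}\neq\text{\textbf{k}}$, forcing $\langle\text{\textbf{h}}|\text{\textbf{k}}\rangle=0$. (These same nonvanishing differences also make the right-hand side of \rf{ARXFh|k} well defined.)

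For the diagonal value I would prove $\langle\text{\textbf{h}}|\text{\textbf{h}}\rangle=\prod_{1\le b<a\le\mathsf{N}}(\eta_a^{(h_a)}-\eta_b^{(h_b)})^{-1}$ by induction on $|\text{\textbf{h}}|=\sum_n h_n$, writing $e_a$ for the $a$-th unit vector. Because the $\mathsf{C}(\lambda)$ mutually commute (a Yang-Baxter relation), for any $a$ with $h_a\ge1$ the product in \rf{ARXFD-left-eigenstates} can be reordered to split off its last site-$a$ factor; since the constant \textsc{n} of \rf{ARXFNorm-def} is independent of $\text{\textbf{h}}$ it cancels, leaving $\langle\text{\textbf{h}}|=\langle\text{\textbf{h}}-e_a|\,\mathsf{C}(\eta_a^{(h_a-1)})/d(\eta_a^{(h_a)})$. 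Letting $\mathsf{C}(\eta_a^{(h_a-1)})$ act to the right on $|\text{\textbf{h}}\rangle$ through \rf{ARXFC-SOV_D-right} produces a sum $\sum_{a'}|\text{\textbf{h}}-e_{a'}\rangle\times(\text{coeff})$; pairing with $\langle\text{\textbf{h}}-e_a|$ and invoking the orthogonality just proved kills every term but $a'=a$, whose coefficient carries a factor $d(\eta_a^{(h_a)})$ that cancels the denominator. This gives the recursion
\[
\langle\text{\textbf{h}}|\text{\textbf{h}}\rangle=\langle\text{\textbf{h}}-e_a|\text{\textbf{h}}-e_a\rangle\prod_{b\neq a}\frac{\eta_a^{(h_a-1)}-\eta_b^{(h_b)}}{\eta_a^{(h_a)}-\eta_b^{(h_b)}}.
\]

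It then remains to check that this recursion integrates to the claimed product. The base case $\text{\textbf{h}}=\mathbf{0}$ gives $\langle\mathbf{0}|\mathbf{0}\rangle=\text{\textsc{n}}^{-2}\langle0|0\rangle=\prod_{b<a}(\eta_a^{(0)}-\eta_b^{(0)})^{-1}$, using $\langle0|0\rangle=1$ and \rf{ARXFNorm-def}. For the step one compares the recursion factor with the ratio of the conjectured products for $\text{\textbf{h}}$ and $\text{\textbf{h}}-e_a$: only pairs involving the index $a$ survive, the $b<a$ pairs reproducing $\prod_{b<a}\frac{\eta_a^{(h_a-1)}-\eta_b^{(h_b)}}{\eta_a^{(h_a)}-\eta_b^{(h_b)}}$ directly, and the $c>a$ pairs giving $\prod_{c>a}\frac{\eta_c^{(h_c)}-\eta_a^{(h_a-1)}}{\eta_c^{(h_c)}-\eta_a^{(h_a)}}$. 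The one place that needs care is precisely this last piece: cancelling a sign in numerator and denominator turns it into $\prod_{c>a}\frac{\eta_a^{(h_a-1)}-\eta_c^{(h_c)}}{\eta_a^{(h_a)}-\eta_c^{(h_c)}}$, so the two ordered products recombine exactly into the unordered $\prod_{b\neq a}$ of the recursion. Reconciling the ordered product $\prod_{b<a}$ of \rf{ARXFh|k} with the unordered $\prod_{b\neq a}$ delivered by the action formula is thus the only delicate bookkeeping; everything else is routine cancellation, and an analogous computation that peels the right state $|\text{\textbf{h}}\rangle$ and uses the $\mathsf{B}$-action formulas furnishes an independent consistency check.
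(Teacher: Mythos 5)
Your proof is correct and follows essentially the same route as the paper: orthogonality from sandwiching $\mathsf{D}(\lambda)$ between the two eigenstates, then the matrix element of $\mathsf{C}(\eta_a^{(h_a-1)})$ evaluated once to the left and once to the right (your product-reordering of \rf{ARXFD-left-eigenstates} is just a repackaging of the paper's $\theta_a$ computation) to obtain exactly the recursion \rf{ARXFF1}, which telescopes with the base case fixed by \textsc{n} and $\langle 0|0\rangle =1$. Your extra checks---that \rf{ARXFE-SOV} makes the roots of $d_{\text{\textbf{h}}}$ pairwise distinct so that $d_{\text{\textbf{h}}}$ determines \textbf{h}, and the sign bookkeeping in the induction step---are details the paper leaves implicit, not a different argument.
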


\begin{proof}
The identity:%
\begin{equation}
(d_{\text{\textbf{h}}}(\lambda )-d_{\text{\textbf{k}}}(\lambda ))\langle 
\text{\textbf{h}}|\text{\textbf{k}}\rangle =0\text{ \ \ }\forall \lambda \in 
\mathbb{C},
\end{equation}%
obtained by computing $\langle $\textbf{h}$|\mathsf{D}(\lambda )|$\textbf{k}$%
\rangle $ implies: 
\begin{equation}
\langle \text{\textbf{h}}|\text{\textbf{k}}\rangle =0\text{ \ \ }\forall 
\text{\textbf{h}}\neq \text{\textbf{k}}\in \{0,...,2s_{1}\}\times ....\times
\{0,...,2s_{\mathsf{N}}\},  \label{ARXFortho-LR}
\end{equation}%
then we have just to compute $\langle $\textbf{h}$|$\textbf{h}$\rangle $. In
order to do so we compute $\theta _{a}\equiv \langle
h_{1},...,h_{a}-1,...,h_{\mathsf{N}}|C(\eta
_{a}^{(h_{a}-1)})|h_{1},...,h_{a},...,h_{\mathsf{N}}\rangle $, where $1\leq
h_{a}\leq 2s_{a}$ and $a\in \{1,...,\mathsf{N}\}$. In our SOV-representation
the left action of $C(\eta _{a}^{(h_{a}-1)})$ reads:%
\begin{equation}
\theta _{a}=d(\eta _{a}^{(h_{a})})\langle h_{1},...,h_{a},...,h_{\mathsf{N}%
}|h_{1},...,h_{a},...,h_{\mathsf{N}}\rangle ,\text{ being }\beta _{h_{a}-1}=0%
\text{ \ for }1\leq h_{a}\leq 2s_{a},
\end{equation}%
while the right action plus the identity $\left( \ref{ARXFortho-LR}\right) $
imply:%
\begin{equation}
\theta _{a}=\prod_{b\neq a,b=1}^{\mathsf{N}}\frac{\eta _{a}^{(h_{a}-1)}-\eta
_{b}^{(h_{b})}}{\eta _{a}^{(h_{a})}-\eta _{b}^{(h_{b})}}d(\eta
_{a}^{(h_{a})})\langle h_{1},...,h_{a}-1,...,h_{\mathsf{N}%
}|h_{1},...,h_{a}-1,...,h_{\mathsf{N}}\rangle,
\end{equation}%
and so:%
\begin{equation}
\frac{\langle h_{1},...,h_{a},...,h_{\mathsf{N}}|h_{1},...,h_{a},...,h_{%
\mathsf{N}}\rangle }{\langle h_{1},...,h_{a}-1,...,h_{\mathsf{N}%
}|h_{1},...,h_{a}-1,...,h_{\mathsf{N}}\rangle }=\prod_{b\neq a,b=1}^{\mathsf{%
N}}\frac{\eta _{a}^{(h_{a}-1)}-\eta _{b}^{(h_{b})}}{\eta _{a}^{(h_{a})}-\eta
_{b}^{(h_{b})}}.  \label{ARXFF1}
\end{equation}

This result implies:%
\begin{equation}
\frac{\langle h_{1},...,h_{\mathsf{N}}|h_{1},...,h_{\mathsf{N}}\rangle }{%
\langle 0|0\rangle /\text{\textsc{n}}^{2}}=\prod_{1\leq b<a\leq \mathsf{N}}%
\frac{\eta _{a}^{(0)}-\eta _{b}^{(0)}}{\eta _{a}^{(h_{a})}-\eta
_{b}^{(h_{b})}},  \label{ARXFF2}
\end{equation}%
then the definition $\left( \ref{ARXFNorm-def}\right) $ of the normalization 
\textsc{n}\ and the fact that:%
\begin{equation}
\langle 0|0\rangle =1,
\end{equation}%
imply $\left( \ref{ARXFh|k}\right) $.
\end{proof}

From the previous result and from the fact that $\mathsf{D}(\lambda )$ is
diagonalizable and with simple spectrum we get the following decomposition
of the identity $\mathbb{I}$:%
\begin{equation}
\mathbb{I}\equiv \sum_{h_{1}=0}^{2s_{1}}...\sum_{h_{\mathsf{N}}=0}^{2s_{%
\mathsf{N}}}\prod_{1\leq b<a\leq \mathsf{N}}(\eta _{a}^{(h_{a})}-\eta
_{b}^{(h_{b})})|h_{1},...,h_{\mathsf{N}}\rangle \langle h_{1},...,h_{\mathsf{%
N}}|.  \label{ARXFDecomp-Id}
\end{equation}

\section{$\mathsf{\bar{T}}$-spectrum characterization by SOV}

Let 
\begin{equation}
\mathbb{C}_{even}[\lambda ]_{\mathsf{N}-1}\text{ for }\mathsf{N}\text{ odd,
\ \ \ }\mathbb{C}_{odd}[\lambda ]_{\mathsf{N}-1}\text{ for }\mathsf{N}\text{
even},  \label{ARXFset-t}
\end{equation}%
be the linear spaces in the field $\mathbb{C}$ of the polynomials of degree $%
\mathsf{N}-1$ in the variable $\lambda $ even or odd as stated in the
subscript, then, the set of the eigenvalue functions $t(\lambda )$ of $%
\mathsf{\bar{T}}(\lambda )$, $\Sigma _{\mathsf{\bar{T}}}$ is contained in (%
\ref{ARXFset-t}) and moreover it holds:

\begin{theorem}
\label{ARXFC:T-eigenstates}Under the conditions $\left( \ref{ARXFE-SOV}\right) $,
the spectrum of $\bar{\mathsf{T}}(\lambda )$ is simple and $\Sigma _{{%
\mathsf{\bar{T}}}}$ coincides with the set of solutions of the discrete
system of equations:%
\begin{equation}
\det_{2s_{n}+1}D_{n}=0,\text{ \ \ }\forall n\in \{1,...,\mathsf{N}\},
\label{ARXFI-Functional-eq}
\end{equation}%
in (\ref{ARXFset-t}). Here, $D_{n}$ is the $\left( 2s_{n}+1\right) \times \left(
2s_{n}+1\right) $ tridiagonal matrix:%
\begin{equation}
D_{n}\equiv \left( 
\begin{array}{cccccc}
t(\eta _{n}^{(0)}) & -a(\eta _{n}^{(0)}) & 0\ldots &  & 0 & 0 \\ 
-d(\eta _{n}^{(1)}) & t(\eta _{n}^{(1)}) & -a(\eta _{n}^{(1)})\ldots &  & 0
& 0 \\ 
0 & {\quad }\ddots &  &  &  &  \\ 
\vdots &  & \ddots &  &  &  \\ 
\vdots &  & 0\ldots & -d(\eta _{n}^{(2s_{n}-1)}) & t(\eta _{n}^{(2s_{n}-1)})
& -a(\eta _{n}^{(2s_{n}-1)}) \\ 
0 & \ldots & 0\ldots & 0 & -d(\eta _{n}^{(2s_{n})}) & t(\eta _{n}^{(2s_{n})})%
\end{array}%
\right) .
\end{equation}

\begin{itemize}
\item[\textsf{I)}] Up to an overall normalization, the right $\mathsf{\bar{T}%
}$-eigenstate corresponding to $t(\lambda )\in \Sigma _{\mathsf{\bar{T}}}$
reads:%
\begin{equation}
|t\rangle =\sum_{h_{1}=0}^{2s_{1}}...\sum_{h_{\mathsf{N}}=0}^{2s_{\mathsf{N}%
}}\prod_{a=1}^{\mathsf{N}}Q_{t}(\eta _{a}^{(h_{a})})\prod_{1\leq b<a\leq 
\mathsf{N}}(\eta _{a}^{(h_{a})}-\eta _{b}^{(h_{b})})|h_{1},...,h_{\mathsf{N}%
}\rangle ,  \label{ARXFeigenT-r-D}
\end{equation}%
where:%
\begin{eqnarray}
Q_{t}(\eta _{a}^{(h_{a}+1)}) &=&\frac{t(\eta _{a}^{(h_{a})})}{d(\eta
_{a}^{(h_{a}+1)})}Q_{t}(\eta _{a}^{(h_{a})})-\frac{a(\eta _{a}^{(h_{a}-1)})}{%
d(\eta _{a}^{(h_{a}+1)})}Q_{t}(\eta _{a}^{(h_{a}-1)}),\text{\ \ }\forall
h_{a}\in \{1,...,2s_{a}-1\},  \label{ARXFt-Q-relation-h} \\
Q_{t}(\eta _{a}^{(1)}) &=&Q_{t}(\eta _{a}^{(0)})t(\eta _{a}^{(0)})/d(\eta
_{a}^{(1)}).  \label{ARXFt-Q-relation}
\end{eqnarray}

\item[\textsf{II)}] Up to an overall normalization, the left $\mathsf{\bar{T}%
}$-eigenstate corresponding to $t(\lambda )\in \Sigma _{\mathsf{\bar{T}}}$
reads:%
\begin{equation}
\langle t|=\sum_{h_{1}=0}^{2s_{1}}...\sum_{h_{\mathsf{N}}=0}^{2s_{\mathsf{N}%
}}\prod_{a=1}^{\mathsf{N}}\bar{Q}_{t}(\eta _{a}^{(h_{a})})\prod_{1\leq
b<a\leq \mathsf{N}}(\eta _{a}^{(h_{a})}-\eta _{b}^{(h_{b})})\langle
h_{1},...,h_{\mathsf{N}}|,  \label{ARXFeigenT-l-D}
\end{equation}%
where:%
\begin{eqnarray}
\bar{Q}_{t}(\eta _{a}^{(h_{a}+1)}) &=&\frac{t(\eta _{a}^{(h_{a})})}{a(\eta
_{a}^{(h_{a})})}\bar{Q}_{t}(\eta _{a}^{(h_{a})})-\frac{d(\eta _{a}^{(h_{a})})%
}{a(\eta _{a}^{(h_{a})})}\bar{Q}_{t}(\eta _{a}^{(h_{a}-1)}),\text{\ \ }%
\forall h_{a}\in \{1,...,2s_{a}-1\},  \label{ARXFt-Qbar-relation-h} \\
\bar{Q}_{t}(\eta _{a}^{(1)}) &=&\bar{Q}_{t}(\eta _{a}^{(0)})t(\eta
_{a}^{(0)})/a(\eta _{a}^{(0)}).  \label{ARXFt-Qbar-relation}
\end{eqnarray}
\end{itemize}
\end{theorem}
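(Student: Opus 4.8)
The plan is to reduce the operatorial eigenvalue problem to $\mathsf N$ decoupled finite three--term recursions by working in the $\mathsf D$--eigenbasis. First I would take a left eigenstate, $\langle t|\bar{\mathsf T}(\lambda)=t(\lambda)\langle t|$, pair it with a generic right $\mathsf D$--eigenstate $|\mathbf k\rangle$, and let $\bar{\mathsf T}=\mathsf B+\mathsf C$ act to the right through the SOV actions \rf{ARXFC-SOV_D-right}--\rf{ARXFB-SOV_D-right}. Evaluating the identity at the separation point $\lambda=\eta_n^{(k_n)}$ annihilates every summand with $a\neq n$, since then the interpolation factor $\prod_{b\neq a}(\lambda-\eta_b^{(k_b)})$ contains $(\eta_n^{(k_n)}-\eta_n^{(k_n)})=0$, and only the $a=n$ term survives:
\[
a\big(\eta_n^{(k_n)}\big)\,\langle t|\mathbf k+\mathbf e_n\rangle+d\big(\eta_n^{(k_n)}\big)\,\langle t|\mathbf k-\mathbf e_n\rangle=t\big(\eta_n^{(k_n)}\big)\,\langle t|\mathbf k\rangle ,
\]
where $\mathbf e_n$ is the unit shift of the $n$--th quantum number. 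All coefficients depend on $k_n$ alone, and because $a(\eta_n^{(2s_n)})=0$ and $d(\eta_n^{(0)})=0$ (both read off from the explicit $a,d$) the terms that would leave the window $k_n\in\{0,\dots,2s_n\}$ drop out by themselves; the recursion thus closes on this finite range.

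This single relation yields the whole characterization. For any left eigenvector the slice of $\langle t|\mathbf k\rangle$ obtained by freezing $k_b$, $b\neq n$, and varying $k_n$ is a kernel vector of the tridiagonal matrix $D_n$; choosing $\mathbf k^\ast$ with $\langle t|\mathbf k^\ast\rangle\neq0$ makes this slice nonzero along every direction $n$, forcing $\det_{2s_n+1}D_n=0$, which is the necessity of \rf{ARXFI-Functional-eq}. Conversely, given $t$ in the class \rf{ARXFset-t} with all $\det D_n=0$, I would define $\bar Q_t$ by the recursion \rf{ARXFt-Qbar-relation-h} with initial datum \rf{ARXFt-Qbar-relation} (the denominators $a(\eta_n^{(k)})$ being nonzero under \rf{ARXFE-SOV}); the vanishing of the determinant is precisely what makes the remaining top equation at $k_n=2s_n$ automatically hold, so $(\bar Q_t(\eta_n^{(k)}))_{k}$ is a genuine kernel vector. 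Setting $\langle t|$ to the factorized covector \rf{ARXFeigenT-l-D} then gives, via the scalar products \rf{ARXFh|k}, exactly $\langle t|\mathbf k\rangle=\prod_a\bar Q_t(\eta_a^{(k_a)})$, so the displayed relation is satisfied. The right eigenvector \rf{ARXFeigenT-r-D} is produced symmetrically by pairing $|t\rangle$ with $\langle\mathbf k|$ and using the left actions \rf{ARXFC-SOV_D-left}--\rf{ARXFB-SOV_D-left}; this route carries the shifts $\alpha_{k_a},\beta_{k_a}$ and gives the recursion \rf{ARXFt-Q-relation-h}--\rf{ARXFt-Q-relation}, whose tridiagonal matrix differs from $D_n$ only in the placement of the $a$'s and $d$'s --- but the products of its paired off--diagonal entries coincide with those of $D_n$, so its continuant, and hence its determinant, is again $\det D_n$.

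To pass from these finitely many identities to the operator identity I would use interpolation. For each fixed $\mathbf k$, both $\langle t|\bar{\mathsf T}(\lambda)|\mathbf k\rangle$ and $t(\lambda)\langle t|\mathbf k\rangle$ are polynomials in $\lambda$ of degree at most $\mathsf N-1$ lying in \rf{ARXFset-t}, and by construction they agree at the $\mathsf N$ values $\lambda=\eta_b^{(k_b)}$, $b=1,\dots,\mathsf N$. Condition \rf{ARXFE-SOV} is exactly what guarantees these $\mathsf N$ points are pairwise distinct, so the two polynomials coincide for all $\lambda$; as $\{|\mathbf k\rangle\}$ is a basis this promotes the relation to $\langle t|\bar{\mathsf T}(\lambda)=t(\lambda)\langle t|$, and $\langle t|\mathbf 0\rangle=\prod_a\bar Q_t(\eta_a^{(0)})\neq0$ shows $\langle t|\neq0$. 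The same argument on the right establishes $\bar{\mathsf T}(\lambda)|t\rangle=t(\lambda)|t\rangle$.

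Finally, simplicity. Since the interior off--diagonal entries of $D_n$ never vanish ($a(\eta_n^{(k)})\neq0$ for $k<2s_n$, $d(\eta_n^{(k)})\neq0$ for $k>0$), the kernel of $D_n$ is at most one dimensional when $\det D_n=0$, so rerunning the slice argument for an arbitrary $t$--eigenvector forces its components to factorize through the unique (up to scale) recursion solution; hence every eigenspace of $\bar{\mathsf T}$ is one dimensional. The step I expect to be the genuine obstacle is completeness: to upgrade ``one dimensional eigenspaces'' to a \emph{simple} spectrum one must exclude Jordan blocks, i.e.\ show that the discrete system \rf{ARXFI-Functional-eq} has exactly $d_{\mathsf N}=\prod_n(2s_n+1)$ solutions $t$ in \rf{ARXFset-t}. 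With that count in hand the corresponding eigenvectors, attached to pairwise distinct eigenvalue functions, are linearly independent and exhaust $\mathcal R_{\mathsf N}$ by \rf{ARXFDecomp-Id}, so $\bar{\mathsf T}$ is diagonalizable with nondegenerate spectrum. I would obtain the count by a degree/elimination argument on the polynomial system \rf{ARXFI-Functional-eq} (already for $\mathsf N=1$ it reduces to the single polynomial equation $\det D_1=0$, of degree $2s_1+1$ in the constant $t$, with the expected number of roots); in the regime of the Lemma, where $\bar{\mathsf T}$ is normal and therefore automatically diagonalizable, the counting can be bypassed, but for generic complex inhomogeneities it is the delicate point.
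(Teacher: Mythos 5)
Your proposal is correct and takes essentially the same route as the paper's proof: the discrete Baxter-like system obtained by evaluating the SOV actions at the separation points, its rewriting as the homogeneous tridiagonal systems with vanishing determinants, uniqueness of the kernel vector (the paper uses the nonzero minor $\det \left( D_{n}\right) _{1,2s_{n}+1}=\prod_{h=1}^{2s_{n}}d(\eta _{n}^{(h)})$, equivalent to your nonvanishing interior off-diagonals), and a degree-$(\mathsf{N}-1)$ interpolation at the $\mathsf{N}$ distinct points guaranteed by $\rf{ARXFE-SOV}$ for the converse. Your closing caveat is also consistent with the paper, which within this theorem proves simplicity only in the sense of a unique eigenvector per eigenvalue function and relies on the normality of $\mathsf{\bar{T}}(\lambda )$ (established in the Lemma for the stated parameter regimes) when it later writes the spectral decomposition of the identity, exactly the bypass you describe.
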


\begin{proof}
The \textit{wave-functions:}%
\begin{equation}
\Psi _{t}(\text{\textbf{h}})\equiv \langle t|\text{\textbf{h}}\rangle ,
\end{equation}%
which are the coefficients of $\langle t|$, $\mathsf{\bar{T}}$-eigenstate
corresponding to the eigenvalue $t(\lambda )\in \Sigma _{{\mathsf{\bar{T}}}}$%
, in the SOV-decomposition of the identity \rf{ARXFDecomp-Id}, satisfy the following discrete
system of $d_{\mathsf{N}}\equiv \prod_{n=1}^{\mathsf{N}}(2s_{n}+1)$
Baxter-like equations:%
\begin{equation}
t(\eta {}_{n}^{(h_{n})})\Psi _{t}(\text{\textbf{h}})\,=\,a(\eta
{}_{n}^{(h_{n})})\Psi _{t}(\mathsf{T}_{n}^{+}(\text{\textbf{h}}))+d(\eta
{}_{n}^{(h_{n})})\Psi _{t}(\mathsf{T}_{n}^{-}(\text{\textbf{h}})),
\label{ARXFSOVBax1}
\end{equation}%
for any$\,n\in \{1,...,$\textsf{$N$}$\}$ and \textbf{h}$\in \otimes _{n=1}^{%
\mathsf{N}}\{0,...,2s_{n}\}$, where:%
\begin{equation}
\mathsf{T}_{n}^{\pm }(\text{\textbf{h}})\equiv (h_{1},\dots ,h_{n}\pm
1,\dots ,h_{\mathsf{N}}).
\end{equation}%
Now observing that:%
\begin{equation}
a(\eta _{n}^{(2s_{n})})=d(\eta _{n}^{(0)})=0,
\end{equation}%
we can rephrase the previous system by the homogeneous system of equations:%
\begin{equation}
D_{n}\left( 
\begin{array}{l}
\Psi _{t}(\cdots ,h_{n}=0,\cdots ) \\ 
\Psi _{t}(\cdots ,h_{n}=1,\cdots ) \\ 
\vdots \\ 
\vdots \\ 
\Psi _{t}(\cdots ,h_{n}=2s_{n},\cdots )%
\end{array}%
\right) _{\left( 2s_{n}+1\right) \times 1}=\left( 
\begin{array}{l}
0 \\ 
\vdots \\ 
\vdots \\ 
\vdots \\ 
0%
\end{array}%
\right) _{\left( 2s_{n}+1\right) \times 1}\text{ \ \ \ }\forall n\in \{1,...,%
\mathsf{N}\}.  \label{ARXFhomo-system}
\end{equation}%
From $t(\lambda )\in \Sigma _{{\mathsf{\bar{T}}}}$ follows that the
determinants of the $\left( 2s_{n}+1\right) \times \left( 2s_{n}+1\right) $
matrix $D_{n}$ must be zero for any$\,n\in \{1,...,$\textsf{$N$}$\}$, i.e. we get
the system of equations $\left( \ref{ARXFI-Functional-eq}\right) $.\ Let $\det
\left( D_{n}\right) _{i,j}$ be the minor obtained by eliminating the row $i$
and the column $j$ from the matrix $D_{n}$, then being:%
\begin{equation}
\det \left( D_{n}\right) _{1,2s_{n}+1}=\prod_{h_{n}=1}^{2s_{n}}d(\eta
_{n}^{(h_{n})})\neq 0,  \label{ARXFRank1}
\end{equation}%
the matrices $D_{n}$ have rank $2s_{n}$ and then the solution of $\left( \ref{ARXFhomo-system}\right) $ is unique up to an overall normalization. Then for
any $t(\lambda )\in \Sigma _{{\mathsf{\bar{T}}}}$ there exist (up to
normalization) one and only one $\mathsf{\bar{T}}$-eigenstate $\langle t|$
characterized by: 
\begin{equation}
\frac{\Psi _{t}(h_{1},...,h_{n}+1,...,h_{1})}{\Psi
_{t}(h_{1},...,h_{n},...,h_{1})}=\frac{\bar{Q}_{t}(\eta _{n}^{(h_{n}+1)})}{%
\bar{Q}_{t}(\eta _{n}^{(h_{n})})},
\end{equation}%
for any$\,n\in \{1,...,$\textsf{$N$}$\}$, $h_{n}\in \{0,...,2s_{n}-1\}$ and $%
h_{m\neq n}\in \{0,...,2s_{m}\}$ as the equations $\left( \ref{ARXFt-Qbar-relation-h}\right) $-$\left( \ref{ARXFt-Qbar-relation}\right) $ fix
uniquely the ratios on the r.h.s. of the above equations; this proves the
simplicity of the $\mathsf{\bar{T}}$-spectrum.

Vice versa, let $t(\lambda )$ be a solution of (\ref{ARXFI-Functional-eq}) in (\ref{ARXFset-t}), then the state $|\,t\,\rangle $
constructed by $\left( \ref{ARXFeigenT-l-D}\right) $-$\left( \ref{ARXFt-Qbar-relation}\right) $ satisfies:
\begin{equation}
\langle t|\mathsf{\bar{T}}(\eta _{n}^{(h_{n})})|h_{1},...,h_{\mathsf{N}%
}\rangle =t(\eta _{n}^{(h_{n})})\langle t|h_{1},...,h_{\mathsf{N}}\rangle 
\text{ \ }\forall n\in \{1,...,\mathsf{N}\}
\end{equation}%
for any $\mathsf{D}$-eigenstate $\langle h_{1},...,h_{\mathsf{N}}|$. This
implies:%
\begin{equation}
\langle h_{1},...,h_{\mathsf{N}}|\mathsf{\bar{T}}(\lambda )|\,t\,\rangle
=t(\lambda )\langle h_{1},...,h_{\mathsf{N}}|\,t\,\rangle ,
\end{equation}%
that is $t(\lambda )\in \Sigma _{{\mathsf{\bar{T}}}}$\ and $|\,t\,\rangle $
is the corresponding $\mathsf{\bar{T}}$-eigenstate as $\mathsf{\bar{T}}%
(\lambda )$ is a polynomials of degree $\mathsf{N}-1$ in $\lambda $, even or
odd for $\mathsf{N}$ odd or even.
\end{proof}

Note that the Baxter Q-operator construction given in \cite{ARXFYB95} can be adapted in
particular to the representation here considered and it represents an
interesting issue as allows to reformulate by functional equations the SOV
characterization of $\mathsf{\bar{T}}$-spectrum.

\section{$\mathsf{\bar{T}}$-decomposition of the identity}\label{ARXFSP}

As previously introduced in \cite{ARXFGMN12-SG,ARXFN12-0} and \cite{ARXFN12-2}, we
recall the definition of separate covector $\langle \alpha |$ and vector $|\beta
\rangle $ in the SOV representations:%
\begin{align}
\langle \alpha |& =\sum_{h_{1}=0}^{2s_{1}}\cdots \sum_{h_{\mathsf{N}%
}=0}^{2s_{\mathsf{N}}}\prod_{a=1}^{\mathsf{N}}\alpha _{a}(\eta
_{a}^{(h_{a})})\prod_{1\leq b<a\leq \mathsf{N}}(\eta _{a}^{(h_{a})}-\eta
_{b}^{(h_{b})})\langle h_{1},...,h_{\mathsf{N}}|, \\
|\beta \rangle & =\sum_{h_{1}=0}^{2s_{1}}\cdots \sum_{h_{\mathsf{N}}=0}^{2s_{%
\mathsf{N}}}\prod_{a=1}^{\mathsf{N}}\beta _{a}(\eta
_{a}^{(h_{a})})\prod_{1\leq b<a\leq \mathsf{N}}(\eta _{a}^{(h_{a})}-\eta
_{b}^{(h_{b})})|h_{1},...,h_{\mathsf{N}}\rangle ,
\end{align}%
and we show that also for the model under consideration the following
results hold:

\begin{proposition}
The generic separate covector $\langle \alpha |$ acts on the generic separate
vector $|\beta \rangle $ according to the following formula:%
\begin{equation}
\langle \alpha |\beta \rangle =\det_{\mathsf{N}}||\mathcal{M}_{a,b}^{\left(
\alpha ,\beta \right) }||\text{ \ \ with \ }\mathcal{M}_{a,b}^{\left( \alpha
,\beta \right) }\equiv \sum_{h=0}^{2s_{a}}\alpha _{a}(\eta _{a}^{(h)})\beta
_{a}(\eta _{a}^{(h)})\left( \eta _{a}^{(h)}\right) ^{b-1}.  \label{ARXFSP-det}
\end{equation}
\end{proposition}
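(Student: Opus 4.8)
The plan is to compute $\langle \alpha|\beta\rangle$ directly by substituting the SOV expansions of both separate states and using the orthogonality-cum-normalization of the $\mathsf{D}$-eigenbasis established in the earlier proposition. Concretely, from
\begin{equation}
\langle \alpha|=\sum_{\text{\textbf{h}}}\prod_{a=1}^{\mathsf{N}}\alpha _{a}(\eta_{a}^{(h_{a})})\prod_{1\leq b<a\leq \mathsf{N}}(\eta _{a}^{(h_{a})}-\eta_{b}^{(h_{b})})\langle \text{\textbf{h}}|,
\end{equation}
and the matching expansion of $|\beta\rangle$, the scalar product becomes a double sum over multi-indices \textbf{h} and \textbf{k}. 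The key input is that $\langle \text{\textbf{h}}|\text{\textbf{k}}\rangle=\prod_{c}\delta_{h_c,k_c}\prod_{b<a}(\eta_{a}^{(h_a)}-\eta_{b}^{(h_b)})^{-1}$, which collapses the double sum to a single sum over \textbf{h} and cancels one of the two Vandermonde-type products, leaving one factor $\prod_{b<a}(\eta_{a}^{(h_a)}-\eta_{b}^{(h_b)})$ together with $\prod_a \alpha_a(\eta_a^{(h_a)})\beta_a(\eta_a^{(h_a)})$.

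The heart of the argument is then a Vandermonde/Cauchy--Binet manipulation. First I would recognize that the surviving product $\prod_{1\leq b<a\leq \mathsf{N}}(\eta_a^{(h_a)}-\eta_b^{(h_b)})$ is a Vandermonde determinant in the variables $\eta_a^{(h_a)}$, so it can be written as $\det_{\mathsf{N}}\|(\eta_a^{(h_a)})^{b-1}\|$. Substituting this determinant form, the scalar product becomes
\begin{equation}
\langle \alpha|\beta\rangle=\sum_{h_1=0}^{2s_1}\cdots\sum_{h_{\mathsf{N}}=0}^{2s_{\mathsf{N}}}\prod_{a=1}^{\mathsf{N}}\alpha_a(\eta_a^{(h_a)})\beta_a(\eta_a^{(h_a)})\,\det_{\mathsf{N}}\|(\eta_a^{(h_a)})^{b-1}\|.
\end{equation}
Each summation index $h_a$ appears only in the $a$-th row of the determinant (through the entries $(\eta_a^{(h_a)})^{b-1}$) and in the scalar prefactor $\alpha_a(\eta_a^{(h_a)})\beta_a(\eta_a^{(h_a)})$. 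By multilinearity of the determinant in its rows, I can therefore push each single sum $\sum_{h_a}$ inside the determinant, acting only on row $a$; this replaces row $a$ by the row whose $b$-th entry is $\sum_{h=0}^{2s_a}\alpha_a(\eta_a^{(h)})\beta_a(\eta_a^{(h)})(\eta_a^{(h)})^{b-1}$, which is exactly $\mathcal{M}_{a,b}^{(\alpha,\beta)}$. Carrying this out for all rows yields $\langle \alpha|\beta\rangle=\det_{\mathsf{N}}\|\mathcal{M}_{a,b}^{(\alpha,\beta)}\|$, as claimed.

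The step I expect to require the most care is the bookkeeping that lets the $\mathsf{N}$ independent sums be distributed one-per-row: one must check that after using orthogonality there is a single clean Vandermonde (not a ratio or a residual cross term), and that the index $h_a$ genuinely decouples across rows so that the multilinear expansion applies term by term. This is the standard ``Vandermonde times diagonal weight collapses to a determinant of moments'' mechanism, but verifying the normalization factors from $\langle \text{\textbf{h}}|\text{\textbf{k}}\rangle$ cancel precisely the inverse Vandermonde is where a sign or indexing slip could enter. Once that cancellation is confirmed, the remainder is a routine application of determinant multilinearity, and the formula \eqref{ARXFSP-det} follows for arbitrary coefficient functions $\alpha_a,\beta_a$, in direct analogy with the scalar-product formulae already obtained in \cite{ARXFGMN12-SG,ARXFN12-0,ARXFN12-2}.
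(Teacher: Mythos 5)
Your proposal is correct and follows essentially the same route as the paper: the overlap formula $\langle\text{\textbf{h}}|\text{\textbf{k}}\rangle$ collapses the double sum and cancels one of the two Vandermonde products, leaving a single Vandermonde determinant $V(\eta_1^{(h_1)},\ldots,\eta_{\mathsf{N}}^{(h_{\mathsf{N}})})$ weighted by $\prod_a\alpha_a(\eta_a^{(h_a)})\beta_a(\eta_a^{(h_a)})$, and row-by-row multilinearity of the determinant then yields $\det_{\mathsf{N}}||\mathcal{M}_{a,b}^{(\alpha,\beta)}||$. Your careful verification that the inverse Vandermonde from the normalization cancels cleanly (with no residual sign or ratio) is precisely the point the paper leaves implicit, so the argument is sound as written.
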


\begin{proof}
The proof is a consequence of formula $\left( \ref{ARXFh|k}\right) $
which implies:%
\begin{equation}
\langle \alpha |\beta \rangle =\sum_{h_{1}=0}^{2s_{1}}\cdots \sum_{h_{%
\mathsf{N}}=0}^{2s_{\mathsf{N}}}V(\eta _{1}^{(h_{1})},...,\eta _{\mathsf{N}%
}^{(h_{\mathsf{N}})})\prod_{a=1}^{\mathsf{N}}\alpha _{a}(\eta
_{a}^{(h_{a})})\beta _{a}(\eta _{a}^{(h_{a})}),
\end{equation}%
where $V(x_{1},...,x_{\mathsf{N}})\equiv \prod_{1\leq b<a\leq \mathsf{N}%
}(x_{a}-x_{b})$ is the Vandermonde determinant, and of the multilinearity of
the determinant.
\end{proof}

The scalar product introduced in Section \ref{ARXFYB-LR} allows to characterize $%
\langle \alpha |$ as the dual of a vector $\left( \langle \alpha |\right)
^{\dag }\in \mathcal{R}_{\mathsf{N}}$, then the previous formula represents also
the scalar product of two states in $\mathcal{R}_{\mathsf{N}}$. Indeed, from
the Hermitian conjugation properties $\left( \ref{ARXFHerm-1}\right) $ and $%
\left( \ref{ARXFHerm-2}\right) $, it follows that $\left( \langle \alpha
|\right) ^{\dag }$ is a separate vector w.r.t. the $\mathsf{A}$%
-decomposition of the identity. Finally, let us remark that for the transfer
matrix the orthogonality of eigenstates corresponding to different
eigenvalues can be proven directly by using the previous scalar product
formula:

\begin{corollary}
Let us take $t(\lambda )$ and $t^{\prime }(\lambda )\in \Sigma _{{\mathsf{%
\bar{T}}}}$ and let $\langle t|$ and $|t^{\prime }\rangle $ be the
corresponding $\mathsf{\bar{T}}$-eigenstates characterized in Theorem $\ref{ARXFC:T-eigenstates}$, then for $t(\lambda )\neq t^{\prime }(\lambda )$ the $%
\mathsf{N}\times \mathsf{N}$ matrix $||\mathcal{M}_{a,b}^{\left( t,t^{\prime
}\right) }||$ has rank equal or smaller than $\mathsf{N}-1$. Indeed, the non-zero $\mathsf{N}\times 1$ vector V$^{\left( t,t^{\prime }\right) }$ of components:%
\begin{equation}
\text{V}_{b}^{\left( t,t^{\prime }\right) }\equiv c_{b}^{\prime }-c_{b}\text{%
\ \ \ }\forall b\in \{1,...,\mathsf{N}\},  \label{ARXFV-vector}
\end{equation}%
where:%
\begin{equation}
t(\lambda )=\sum_{b=1}^{\mathsf{N}}c_{b}\lambda ^{b-1},\text{ \ \ }t^{\prime
}(\lambda )=\sum_{b=1}^{\mathsf{N}}c_{b}^{\prime }\lambda ^{b-1},
\label{ARXFt-t'-decomp}
\end{equation}%
is an eigenvector of $||\mathcal{M}_{a,b}^{\left( t,t^{\prime }\right) }||$
corresponding to the eigenvalue zero.
\end{corollary}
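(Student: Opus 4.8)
The plan is to show that the specific vector $\text{V}^{(t,t')}$ defined in \rf{ARXFV-vector} lies in the kernel of the matrix $||\mathcal{M}_{a,b}^{(t,t')}||$, which immediately forces the rank to drop to at most $\mathsf{N}-1$. First I would write out the $a$-th component of the image vector $||\mathcal{M}^{(t,t')}||\,\text{V}^{(t,t')}$ explicitly. Using the formula for $\mathcal{M}_{a,b}^{(t,t')}$ from \rf{ARXFSP-det} with $\alpha_a=\bar{Q}_t$ and $\beta_a=Q_{t'}$ (the wave-function data of the two eigenstates), this $a$-th component reads
\begin{equation}
\sum_{b=1}^{\mathsf{N}}\mathcal{M}_{a,b}^{(t,t')}\,\text{V}_{b}^{(t,t')}
=\sum_{h=0}^{2s_{a}}\bar{Q}_t(\eta_a^{(h)})\,Q_{t'}(\eta_a^{(h)})
\sum_{b=1}^{\mathsf{N}}(c_b'-c_b)\bigl(\eta_a^{(h)}\bigr)^{b-1}.
\end{equation}
The key observation is that the inner sum over $b$ collapses by \rf{ARXFt-t'-decomp} to the difference of eigenvalue polynomials evaluated at the separation point, namely $t'(\eta_a^{(h)})-t(\eta_a^{(h)})$. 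Hence the whole expression becomes $\sum_{h=0}^{2s_a}\bar{Q}_t(\eta_a^{(h)})\,Q_{t'}(\eta_a^{(h)})\bigl(t'(\eta_a^{(h)})-t(\eta_a^{(h)})\bigr)$, and the task reduces to proving this single sum vanishes for each $a$.

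The heart of the argument is a telescoping identity driven by the Baxter-type recursions satisfied by $Q_{t'}$ and $\bar{Q}_t$. I would insert the recursion \rf{ARXFt-Q-relation-h}, written as $t'(\eta_a^{(h)})Q_{t'}(\eta_a^{(h)})=d(\eta_a^{(h+1)})Q_{t'}(\eta_a^{(h+1)})+a(\eta_a^{(h-1)})Q_{t'}(\eta_a^{(h-1)})$, together with the mirror recursion \rf{ARXFt-Qbar-relation-h} for $\bar{Q}_t$, into the summand. After substitution each term $\bar{Q}_t(\eta_a^{(h)})Q_{t'}(\eta_a^{(h)})t'(\eta_a^{(h)})$ and its counterpart with $t$ splits into a piece coupling neighbours $h$ and $h+1$ and a piece coupling $h-1$ and $h$; the two families of pieces are set up precisely so that consecutive terms cancel. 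Concretely, I expect the difference $t'(\eta_a^{(h)})-t(\eta_a^{(h)})$ weighted by $\bar{Q}_tQ_{t'}$ to reorganize into $f_a(h+1)-f_a(h)$ for some bilinear combination $f_a(h)$ of the $Q$- and $\bar Q$-values at levels $h-1,h$ (a discrete Wronskian in the variables $\bar Q_t$ and $Q_{t'}$), so that $\sum_{h=0}^{2s_a}\bigl(f_a(h+1)-f_a(h)\bigr)=f_a(2s_a+1)-f_a(0)$.

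The main obstacle, and the step needing genuine care, is verifying that the boundary term $f_a(2s_a+1)-f_a(0)$ vanishes. This is where the zeros $a(\eta_a^{(2s_a)})=d(\eta_a^{(0)})=0$ recorded in the proof of Theorem \ref{ARXFC:T-eigenstates} must be invoked: the coefficients multiplying the extremal $Q$-values in $f_a$ are exactly these vanishing $a$ and $d$ factors, together with the correct handling of the special initial relations \rf{ARXFt-Q-relation} and \rf{ARXFt-Qbar-relation} at $h=0$. I would treat the endpoints separately from the bulk, checking that the telescoping is not spoiled at $h=0$ and $h=2s_a$ by the modified recursions there, and that the surviving boundary contributions are killed by $d(\eta_a^{(0)})=0$ at the bottom and $a(\eta_a^{(2s_a)})=0$ at the top. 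Once each $a$-component is shown to vanish, $\text{V}^{(t,t')}$ is a nonzero null vector (nonzero precisely because $t\neq t'$ makes some $c_b'-c_b\neq 0$), proving the rank bound and hence the claimed orthogonality.
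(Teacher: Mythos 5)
Your proposal is correct and follows essentially the same route as the paper: you reduce the $a$-th component of $||\mathcal{M}^{(t,t')}||\,\mathrm{V}^{(t,t')}$ to $\sum_{h=0}^{2s_a}\bar{Q}_t(\eta_a^{(h)})Q_{t'}(\eta_a^{(h)})\bigl(t'(\eta_a^{(h)})-t(\eta_a^{(h)})\bigr)$, insert the Baxter recursions for both $Q_{t'}$ and $\bar{Q}_t$, and let the resulting discrete-Wronskian terms cancel telescopically, with the boundary contributions annihilated by $a(\eta_a^{(2s_a)})=d(\eta_a^{(0)})=0$ --- which is exactly what the paper's proof does, its $\alpha_{h}$ and $\beta_{h}$ index shifts being precisely the bookkeeping device for the endpoint relations you propose to treat separately.
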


\begin{proof}
By the definition $\left( \ref{ARXFV-vector}\right) $, $\left( \ref{ARXFt-t'-decomp}%
\right) $ and the definition of $\mathcal{M}_{a,b}^{\left( t,t^{\prime
}\right) }$ in $\left( \ref{ARXFSP-det}\right) $, it holds:%
\begin{equation}
\sum_{b=1}^{\mathsf{N}}\mathcal{M}_{a,b}^{\left( t,t^{\prime }\right) }\text{%
V}_{b}^{\left( t,t^{\prime }\right) }=\sum_{h=0}^{2s_{a}}Q_{t^{\prime
}}(\eta _{a}^{(h)})\bar{Q}_{t}(\eta _{a}^{(h)})(t^{\prime }(\eta
_{a}^{(h)})-t(\eta _{a}^{(h)})),  \label{ARXFzero-eigenvector-1}
\end{equation}%
then by using the Baxter relations $(\ref{ARXFt-Q-relation-h})$-$(\ref{ARXFt-Q-relation})$ and $(\ref{ARXFt-Qbar-relation-h})$-$(\ref{ARXFt-Qbar-relation})$
we can write:%
\begin{align}
Q_{t^{\prime }}(\eta _{a}^{(h)})\bar{Q}_{t}(\eta _{a}^{(h)})(t^{\prime
}(\eta _{a}^{(h)})-t(\eta _{a}^{(h)}))& =(d(\eta _{a}^{(h+1-\beta
_{h})})Q_{t^{\prime }}(\eta _{a}^{(h+1)})+a(\eta _{a}^{(h-1+\alpha
_{h})})Q_{t^{\prime }}(\eta _{a}^{(h-1)}))\bar{Q}_{t}(\eta _{a}^{(h)}) 
\notag \\
& -(a(\eta _{a}^{(h)})\bar{Q}_{t}(\eta _{a}^{(h+1)})+d(\eta _{a}^{(h)})\bar{Q%
}_{t}(\eta _{a}^{(h-1)}))Q_{t^{\prime }}(\eta _{a}^{(h)}),
\end{align}%
and by substituting them in (\ref{ARXFzero-eigenvector-1}) we get our result:%
\begin{equation}
\sum_{b=1}^{\mathsf{N}}\mathcal{M} _{a,b}^{\left( t,t^{\prime }\right) }\text{V}%
_{b}^{\left( t,t^{\prime }\right) }=0\text{ \ \ \ \ }\forall a\in \{1,...,%
\mathsf{N}\}.  \label{ARXFzero-eigenvector}
\end{equation}
\end{proof}

The normality of $\mathsf{\bar{T}}(\lambda )$ and the simplicity of its
spectrum imply the following decomposition of the identity:%
\begin{equation}
\mathbb{I=}\sum_{t(\lambda )\in \Sigma _{\mathsf{\bar{T}}}}\frac{|t\rangle
\langle t|}{\langle t|t\rangle }\text{ \ \ with }\langle t|t\rangle =\det_{%
\mathsf{N}}||\mathcal{M} _{a,b}^{\left( t,t\right) }||,
\label{ARXFT-Id-decomp}
\end{equation}%
where for the $\mathsf{\bar{T}}$-eigenstates $\langle t|$ and $|t\rangle $ we
are using the characterization given in Theorem $\ref{ARXFC:T-eigenstates}$.

\section{\label{ARXFRecLSO}Reconstruction of local spin generators}

\subsection{Antiperiodic solution of the quantum inverse problem}

\noindent Here we show how to write the solution of the quantum inverse
problem in the case of the antiperiodic XXX spin $\{s_{1},...,s_{\mathsf{N}%
}\}$-chain. The solution here presented is a simple consequence of the
results derived in \cite{ARXFMaiT00, ARXFCM07} for the periodic chain.

\begin{proposition}
In the XXX spin $\{s_{1},...,s_{\mathsf{N}}\}$-chain, the generic local
operator $X_{n}\in $End$(V_{n}^{(s_{n})})$ at any quantum site $n\in
\{1,..., \mathsf{N}\}$ admits the following reconstruction: 
\begin{equation}
X_{n}=\prod_{k=1}^{n-1}\mathsf{\bar{T}}^{(s_{k})}(\eta _{k})\text{$tr$}%
_{0_{n}}\left( X_{0_{n}}\mathsf{\bar{M}}_{0_{n}}^{(s_{n})}(\eta _{n})\right)
\prod_{k=1}^{n}\left( \mathsf{\bar{T}}^{(s_{k})}(\eta _{k})\right) ^{-1},
\label{ARXFInR1}
\end{equation}%
where the auxiliary space $0_{n}$ is isomorphic to R$_{n}^{(s_{n})}$.\
Moreover, for $X_{n}\equiv S_{n}^{\alpha }$ and $\alpha \equiv \pm ,z$, the
following decompositions hold: 
\begin{align}
\text{$tr$}_{0_{n}}\left( S_{0_{n}}^{\alpha }\mathsf{\bar{M}}%
_{0_{n}}^{(s_{n})}(\eta _{n})\right) & =\sum_{k=1}^{2s_{n}}\mathsf{\bar{T}}%
^{(s_{n}-\frac{k}{2})}\left( \eta _{n}+\frac{k\eta }{2}\right) \text{$tr$}%
_{0}\left( S_{0}^{\alpha }\mathsf{\bar{M}}_{0}^{(1/2)}(\eta
_{n}^{-}+(k-s_{n})\eta )\right)  \notag \\
& \times \mathsf{\bar{T}}^{(\frac{k-1}{2})}\left( \eta _{n}^{-}+\frac{%
(k-2s_{n})\eta }{2}\right) ,  \label{ARXFR1} \\
& =\sum_{k=1}^{2s_{n}}\mathsf{\bar{T}}^{(\frac{k-1}{2})}\left( \eta _{n}^{-}+%
\frac{(k-2s_{n})\eta }{2}\right) \text{$tr$}_{0}\left( S_{0}^{\alpha }%
\mathsf{\bar{M}}_{0}^{(1/2)}(\eta _{n}^{-}+(k-s_{n})\eta )\right)  \notag \\
& \times \mathsf{\bar{T}}^{(s_{n}-\frac{k}{2})}\left( \eta _{n}+\frac{k\eta 
}{2}\right) \!\!,  \label{ARXFR2}
\end{align}%
in terms of the fused transfer matrix and the matrix elements of the basic $%
\mathsf{\bar{M}}_{0}^{(1/2)}(\lambda )$ monodromy matrix.
\end{proposition}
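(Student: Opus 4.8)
The plan is to obtain the antiperiodic reconstruction \rf{ARXFInR1} as a twisted version of the periodic solution of the quantum inverse problem established in \cite{ARXFMaiT00} and extended to higher spin in \cite{ARXFCM07}. The only structural difference between the periodic and antiperiodic settings is the insertion of the twist $\sigma^x$ on the spin-$1/2$ auxiliary space, respectively $\Sigma^x$ on the fused spin-$s$ auxiliary space, and by \rf{ARXFSym-s-1/2} this twist commutes with every Lax operator. Together with the $GL(2,\mathbb{C})$ symmetry \rf{ARXFGL2-Sym}, which guarantees that $\mathsf{\bar{M}}_0^{(1/2)}(\lambda)$ is again a Yang-Baxter solution, this ensures that the full fusion construction \rf{ARXFfusion} and the permutation-at-resonance mechanism underlying the periodic proof survive once the twist is carried along.

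First I would establish the key local identity at the resonance point $\lambda=\eta_n$: the fused Lax operator $\mathsf{L}^{(s_n,s_n)}_{0_n n}(0)$, whose auxiliary space $0_n\simeq\mathrm{R}_n^{(s_n)}$ has matching dimension, is proportional to the permutation operator $P_{0_n,n}$ exchanging the auxiliary and the physical space at site $n$. This is the higher-spin analogue, obtained from \rf{ARXFfusion}, of the spin-$1/2$ regularity property $\mathsf{L}^{(1/2,1/2)}_{0n}(0)\propto P_{0n}$, and it is insensitive to the twist because $\Sigma^x$ sits on the auxiliary factor only. Evaluating $\mathsf{\bar{M}}^{(s_n)}_{0_n}(\eta_n)=\Sigma^x_{0_n}\mathsf{M}^{(s_n)}_{0_n}(\eta_n)$ and isolating the central factor $P_{0_n,n}$, the standard telescoping argument applies: the left product $\prod_{k<n}\mathsf{\bar{T}}^{(s_k)}(\eta_k)$ transports the auxiliary insertion $X_{0_n}$ to site $n$, the permutation swaps the auxiliary and physical indices, and the right product $\prod_{k\le n}(\mathsf{\bar{T}}^{(s_k)}(\eta_k))^{-1}$ cancels the surviving monodromy factors, leaving $X_n$ acting on $\mathrm{R}_n^{(s_n)}$. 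The twist factors collapse consistently because the reconstruction employs the twisted transfer matrices $\mathsf{\bar{T}}^{(s_k)}$ throughout, so the same $\Sigma^x$ occurs in numerator and denominator.

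For the decompositions \rf{ARXFR1}-\rf{ARXFR2} of $tr_{0_n}(S^\alpha_{0_n}\mathsf{\bar{M}}^{(s_n)}_{0_n}(\eta_n))$, the plan is to iterate the fusion recursion \rf{ARXFTrans-fus}, rewriting the spin-$s_n$ auxiliary trace as a telescoping sum of spin-$1/2$ auxiliary traces dressed by lower-spin fused transfer matrices. Concretely, unrolling the projector decomposition \rf{ARXFfusion} one half-unit of spin at a time expresses $\mathsf{\bar{M}}^{(s_n)}_{0_n}$ as an ordered product of $2s_n$ spin-$1/2$ monodromy matrices evaluated at the shifted arguments $\eta_n^-+(k-s_n)\eta$; inserting $S^\alpha_{0}$ in the $k$-th slot and tracing produces the sum over $k$. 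The two orderings of the fusion iteration, building the spin up from the left versus from the right, yield respectively \rf{ARXFR1} and \rf{ARXFR2}, with the fused transfer matrices $\mathsf{\bar{T}}^{(s_n-k/2)}$ and $\mathsf{\bar{T}}^{((k-1)/2)}$ collecting the fused factors on either side of the selected spin-$1/2$ slot.

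The main obstacle I anticipate is not the telescoping itself but the careful bookkeeping of the twist under fusion: one must check that $\Sigma^x$ on the spin-$s_n$ auxiliary space decomposes compatibly with the projector $P_{12}$ onto $\mathrm{R}^{(s)}_{\langle 12\rangle}\oplus\mathrm{R}^{(s-1)}_{(12)}$, so that the central element $\det_q\mathsf{\bar{M}}^{(1/2)}$ appearing in \rf{ARXFfusion} and \rf{ARXFTrans-fus} is reproduced with the correct normalization at each step. Once this compatibility is verified, using the relations $S^\pm_a\Sigma^x_a=\Sigma^x_a S^\mp_a$ and $S^z_a\Sigma^x_a=-\Sigma^x_a S^z_a$, the argument reduces to the periodic computation of \cite{ARXFMaiT00,ARXFCM07} with every transfer matrix replaced by its antiperiodic counterpart, which is exactly why the result is, as claimed, a simple consequence of the periodic one.
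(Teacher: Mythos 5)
Your proposal is correct in outline, but for the main formula \rf{ARXFInR1} it takes a genuinely different route from the paper. You propose to re-open the periodic derivation of \cite{ARXFMaiT00,ARXFCM07}: establish the regularity property $\mathsf{L}^{(s_n,s_n)}_{0_n n}(0)\propto P_{0_n n}$ for the fused Lax operator, re-run the telescoping argument with the twist carried along, and verify the compatibility of $\Sigma^x$ with the fusion projectors --- which you rightly identify as the delicate point, and which is in fact more delicate than your phrase ``the twist commutes with every Lax operator'' suggests: \rf{ARXFSym-s-1/2} only states commutation with the \emph{product} $\sigma_0^x\otimes\Sigma_a^x$, not with the auxiliary twist alone, so moving $\Sigma^x_{0_n}$ through the chain generates physical-space spin flips that must be re-absorbed by the antiperiodic transfer matrices on either side. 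The paper sidesteps this bookkeeping entirely: it uses the periodic reconstruction of \cite{ARXFMaiT00} as a black box, applies it to the local spin-flip operator itself to obtain $\Sigma_n^{(x)}=\prod_{k=1}^{n-1}\mathsf{T}^{(s_k)}(\eta_k)\,\mathsf{\bar{T}}^{(s_n)}(\eta_n)\prod_{k=1}^{n}\left(\mathsf{T}^{(s_k)}(\eta_k)\right)^{-1}$, exploits the involutivity $(\Sigma_b^{(x)})^2=\mathbb{I}_b$ to write products $\prod_{b=1}^{c}\Sigma_b^{(x)}$ in two equivalent ways as ratios of periodic and antiperiodic transfer matrices, and then conjugates the periodic reconstruction of $\tilde{X}_n=X_n\Sigma_n^{(x)}$ by these products; all periodic transfer matrices cancel and \rf{ARXFInR1} drops out by pure operator algebra, with no fusion regularity or twist--projector compatibility ever invoked. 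Your approach buys a self-contained derivation independent of the periodic statement, at the price of redoing the hardest part of \cite{ARXFMaiT00} with the twist; the paper's buys brevity and avoids exactly the obstacle you flag as the main one. For \rf{ARXFR1}--\rf{ARXFR2} your plan (unrolling the fusion recursion one half-unit of spin at a time, in the two orderings) coincides with the paper's, which simply follows the periodic proofs of \cite{ARXFMaiT00} and \cite{ARXFCM07} step by step with periodic objects replaced by antiperiodic ones.
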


\begin{proof}
From Proposition 1 of the article \cite{ARXFMaiT00}, it holds:%
\begin{equation}
X_{n}=\prod_{k=1}^{n-1}\mathsf{T}^{(s_{k})}(\eta _{k})\text{$tr$}%
_{0_{n}}\left( X_{0_{n}}\mathsf{M}_{0_{n}}^{(s_{n})}(\eta _{n})\right)
\prod_{k=1}^{n}\left( \mathsf{T}^{(s_{k})}(\eta _{k})\right) ^{-1},
\label{ARXFR-P-1}
\end{equation}%
then it holds:
\begin{equation}
\Sigma _{n}^{(x)}=\prod_{k=1}^{n-1}\mathsf{T}^{(s_{k})}(\eta _{k})\mathsf{%
\bar{T}}^{(s_{n})}(\eta _{n})\prod_{k=1}^{n}\left( \mathsf{T}^{(s_{k})}(\eta
_{k})\right) ^{-1}.  \label{ARXFs^x_n}
\end{equation}%
So, we can use $(\ref{ARXFs^x_n})$ to write:%
\begin{eqnarray}
\prod_{b=1}^{c}\Sigma _{b}^{(x)} &=&\prod_{b=1}^{c}\mathsf{\bar{T}}%
^{(s_{b})}(\eta _{b})\prod_{b=1}^{c}\left( \mathsf{T}^{(s_{b})}(\eta
_{b})\right) ^{-1}  \label{ARXFP-s^x_n} \\
&=&\prod_{b=1}^{c}\mathsf{T}^{(s_{b})}(\eta _{b})\prod_{b=1}^{c}\left( 
\mathsf{\bar{T}}^{(s_{b})}(\eta _{b})\right) ^{-1},  \label{ARXFP-s^x_n-2}
\end{eqnarray}%
where the second equality follows from:%
\begin{equation}
\prod_{b=1}^{c}\Sigma _{b}^{(x)}=\prod_{b=1}^{c}\left( \Sigma
_{b}^{(x)}\right) ^{-1}
\end{equation}%
being:%
\begin{equation}
\Sigma _{b}^{(x)}\Sigma _{b}^{(x)}=\mathbb{I}_{b}\text{ \ \ where }\mathbb{I}_{b}\text{ is the
identity matrix in }\text{R}_{b}^{(s_{b})}.
\end{equation}%
The result $(\ref{ARXFInR1})$ is derived computing:%
\begin{equation}
X_{n}=\prod_{b=1}^{n-1}\Sigma _{b}^{(x)}\tilde{X}_{n}\prod_{b=1}^{n}\Sigma
_{b}^{(x)}\text{ \ \ with \ \ }\tilde{X}_{n}=X_{n}\Sigma _{n}^{(x)}
\end{equation}%
and using the reconstruction $(\ref{ARXFR-P-1})$ for $\tilde{X}_{n}$ and the reconstruction in $(\ref{ARXFP-s^x_n})$ for the first product of $\Sigma _{b}^{(x)}$
while the reconstruction in $(\ref{ARXFP-s^x_n-2})$ for the second product of $\Sigma _{b}^{(x)}$. The formula $(\ref{ARXFR1})$ can be proven following step by step the proof
given in \cite{ARXFMaiT00} only changing the periodic objects with the
antiperiodic ones. In the same way we can prove formula $(\ref{ARXFR2})$
following step by step the proof given in \cite{ARXFCM07}.
\end{proof}

\section{\label{ARXFFF}Form factors of local operators}

\begin{proposition}
\label{ARXFFF-Prop1}Let $\langle t| $ and $|t^{\prime }\rangle $ be two
eigenstates of the transfer matrix $\mathsf{\bar{T}}(\lambda )$, then it
holds:%
\begin{equation}
\langle t|S_{n}^{-}|t^{\prime }\rangle =\frac{\prod_{h=1}^{n-1}t^{(s_{h})}(%
\eta _{h})}{\prod_{h=1}^{n}t^{\prime (s_{n})}(\eta _{h})}\det_{\mathsf{N}%
+1}(||\mathcal{S}_{a,b}^{\left( -,t,t^{\prime }\right) }||)  \label{ARXFFF-S+/-}
\end{equation}%
where $||\mathcal{S}_{a,b}^{\left( -,t,t^{\prime }\right) }||$ is the $($$%
\mathsf{N}+1)\times (\mathsf{N}+1)$ matrix:%
\begin{eqnarray}
\mathcal{S}_{a,b}^{\left( -,t,t^{\prime }\right) } &\equiv &\mathcal{M}
_{a,b}^{\left( t,t^{\prime }\right) }\text{ \ for \ }a\in \{1,...,\mathsf{N}%
\},\text{\ }b\in \{1,...,\mathsf{N}+1\}, \\
\mathcal{S}_{\mathsf{N}+1,b}^{\left( -,t,t^{\prime }\right) } &\equiv
&\sum_{k=1}^{2s_{n}}\frac{t^{(s_{n}-\frac{k}{2})}\left( \bar{\eta}%
_{n}^{\left( s_{n}+(k+1)/2\right) }\right) }{t^{\prime (\frac{k-1}{2}%
)}\left( \bar{\eta}_{n}^{(k/2)}\right) }\left( \bar{\eta}_{n}^{(k)}\right)
^{b-1},\text{\ \ \ \ }b\in \{1,...,\mathsf{N}+1\},
\end{eqnarray}%
where we have used the notation:%
\begin{equation}
\bar{\eta}_{n}^{(k_{n})}\equiv \eta _{n}+(k_{n}-(s_{n}+1/2))\eta .
\end{equation}
\end{proposition}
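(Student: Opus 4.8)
The plan is to feed the antiperiodic reconstruction of $S_n^{-}$ into the separate-state scalar-product machinery and then collapse the resulting sum over $k$ into a single bordered determinant. First I would insert the reconstruction \rf{ARXFInR1} with $X_n=S_n^{-}$, using the decomposition \rf{ARXFR1} of $\mathrm{tr}_{0_n}\!\left(S_{0_n}^{-}\mathsf{\bar{M}}_{0_n}^{(s_n)}(\eta_n)\right)$ into spin-$1/2$ building blocks, between the eigen-covector $\langle t|$ and the eigen-vector $|t^{\prime}\rangle$. Since $\langle t|$ and $|t^{\prime}\rangle$ are eigenstates of $\mathsf{\bar{T}}=\mathsf{\bar{T}}^{(1/2)}$, the fusion recursion \rf{ARXFTrans-fus} makes them simultaneous eigenstates of every fused transfer matrix $\mathsf{\bar{T}}^{(s)}$, so each fused-transfer-matrix factor is replaced by its scalar eigenvalue. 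The outer products in \rf{ARXFInR1} then yield the prefactor $\prod_{h=1}^{n-1}t^{(s_h)}(\eta_h)/\prod_{h=1}^{n}t^{\prime(s_h)}(\eta_h)$ of \rf{ARXFFF-S+/-}, while the inner sum over $k$ in \rf{ARXFR1} produces, for the $k$-th term, the coefficient $t^{(s_n-k/2)}(\eta_n+k\eta/2)\,t^{\prime((k-1)/2)}(\eta_n^-+(k-2s_n)\eta/2)$. A direct substitution rewrites the two arguments as $\bar{\eta}_n^{(s_n+(k+1)/2)}$ and $\bar{\eta}_n^{(k/2)}$, and rewrites the basic-monodromy spectral parameter $\eta_n^{-}+(k-s_n)\eta$ as $\bar{\eta}_n^{(k)}$ -- exactly the arguments and base point appearing in $\mathcal{S}_{\mathsf{N}+1,b}^{(-,t,t^{\prime})}$.

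Next I would isolate the single surviving operator. A direct $2\times 2$ computation using $\mathsf{\bar{M}}_0^{(1/2)}=\sigma_0^{x}\mathsf{M}_0^{(1/2)}$ gives $\mathrm{tr}_0\!\left(S_0^{-}\mathsf{\bar{M}}_0^{(1/2)}(\mu)\right)=\mathsf{D}(\mu)$, and in the SOV basis $\mathsf{D}(\mu)$ is diagonal with eigenvalue $d_{\mathbf{h}}(\mu)=\prod_m(\mu-\eta_m^{(h_m)})$. Because this eigenvalue factorizes over the sites, $\mathsf{D}(\mu)|t^{\prime}\rangle$ is again a separate vector, obtained from $|t^{\prime}\rangle$ by the replacement $Q_{t^{\prime}}(\eta_a^{(h)})\mapsto(\mu-\eta_a^{(h)})Q_{t^{\prime}}(\eta_a^{(h)})$. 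Applying the scalar-product formula \rf{ARXFSP-det} expresses $\langle t|\mathsf{D}(\mu)|t^{\prime}\rangle$ as $\det_{\mathsf{N}}\|\mu\,\mathcal{M}_{a,b}^{(t,t^{\prime})}-\mathcal{M}_{a,b+1}^{(t,t^{\prime})}\|$; by multilinearity this equals $\sum_{j=1}^{\mathsf{N}+1}(-1)^{\mathsf{N}+1-j}\mu^{j-1}\Delta_j$, where $\Delta_j$ is the minor of the extended $\mathsf{N}\times(\mathsf{N}+1)$ matrix $\|\mathcal{M}_{a,b}^{(t,t^{\prime})}\|_{b=1,\dots,\mathsf{N}+1}$ with its $j$-th column deleted (the extra column $\mathcal{M}_{a,\mathsf{N}+1}^{(t,t^{\prime})}=\sum_h\bar{Q}_t Q_{t^{\prime}}(\eta_a^{(h)})^{\mathsf{N}}$ is produced automatically). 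The form factor has thus become $\sum_k(\mathrm{coeff}_k)\sum_j(-1)^{\mathsf{N}+1-j}(\bar{\eta}_n^{(k)})^{\,j-1}\Delta_j$, evaluated at $\mu_k=\bar{\eta}_n^{(k)}$.

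The final, and hardest, step is the bordering. Reversing the Laplace expansion, $\sum_j(-1)^{\mathsf{N}+1-j}\Delta_j\,r_j$ is precisely the determinant of the $(\mathsf{N}+1)\times(\mathsf{N}+1)$ matrix whose first $\mathsf{N}$ rows are the extended $\mathcal{M}^{(t,t^{\prime})}$ and whose last row is $(r_1,\dots,r_{\mathsf{N}+1})$; choosing $r_b=\sum_k(\mathrm{coeff}_k)(\bar{\eta}_n^{(k)})^{b-1}$ reconstitutes $\mathcal{S}_{\mathsf{N}+1,b}^{(-,t,t^{\prime})}$ and hence \rf{ARXFFF-S+/-}. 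What must be checked carefully is that the last-row coefficient is the clean ratio $t^{(s_n-k/2)}(\bar{\eta}_n^{(s_n+(k+1)/2)})/t^{\prime((k-1)/2)}(\bar{\eta}_n^{(k/2)})$, whereas the naive sandwiching of \rf{ARXFR1} delivers a \emph{product} of these two fused eigenvalues together with the $Q_{t^{\prime}}$-factors carried inside $\Delta_j$. Reconciling the two requires the Baxter $\mathsf{T}$-$Q$ relations \rf{ARXFt-Q-relation-h}--\rf{ARXFt-Q-relation} and their barred analogues \rf{ARXFt-Qbar-relation-h}--\rf{ARXFt-Qbar-relation}: rewriting the fused eigenvalue $t^{\prime((k-1)/2)}$ as a product/ratio of $Q_{t^{\prime}}$-values along the string of SOV points it labels, and cancelling it against the $Q_{t^{\prime}}$-factors of the separate-state determinant, rebalances the $t^{\prime}$-eigenvalue from a multiplicative coefficient into the denominator. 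I expect this $Q$-function bookkeeping -- matching both the shifted arguments and the powers of $Q_{t^{\prime}}$ site by site -- to be the main obstacle; the identical pipeline with $S_0^{+}$ (giving $\mathsf{A}(\mu)$, fixed by the quantum determinant relation) and with $S_0^{z}$ then yields the companion form factors.
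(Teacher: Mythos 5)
Your pipeline up to the bordered determinant is exactly the paper's: insert the reconstruction \rf{ARXFInR1} with the decomposition \rf{ARXFR1}, observe by a direct $2\times 2$ computation that $\mathrm{tr}_{0}\bigl(S_{0}^{-}\mathsf{\bar{M}}_{0}^{(1/2)}(\mu )\bigr)=\mathsf{D}(\mu )$, replace every fused transfer matrix by its eigenvalue (legitimate by the fusion recursion \rf{ARXFTrans-fus} and simplicity of the spectrum), compute $\mathsf{D}(\bar{\eta}_{n}^{(k)})|t^{\prime }\rangle $ in the SOV basis, and resum. Your multilinearity argument — writing $\langle t|\mathsf{D}(\mu )|t^{\prime }\rangle =\det_{\mathsf{N}}\Vert \mu \,\mathcal{M}_{a,b}^{(t,t^{\prime })}-\mathcal{M}_{a,b+1}^{(t,t^{\prime })}\Vert $ and reversing the Laplace expansion along a last row of powers of $\mu $ — is an equivalent rephrasing of the paper's manipulation, which absorbs the factor $\prod_{a}(\bar{\eta}_{n}^{(k)}-\eta _{a}^{(h_{a})})$ into the $(\mathsf{N}+1)$-point Vandermonde and resums the $k$-sum into the border row. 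Up to this point the proposal is a faithful match.

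The final ``rebalancing'' step is where the proposal goes wrong, and it is the one step the paper does not contain: the paper's proof passes directly from the displayed reconstruction to the coefficient $t^{(s_{n}-k/2)}\bigl(\bar{\eta}_{n}^{(s_{n}+(k+1)/2)}\bigr)\big/t^{\prime (\frac{k-1}{2})}\bigl(\bar{\eta}_{n}^{(k/2)}\bigr)$ multiplying $\langle t|\mathsf{D}(\bar{\eta}_{n}^{(k)})|t^{\prime }\rangle $, with no intermediate manipulation whatsoever. You are right that sandwiching \rf{ARXFInR1} combined with \rf{ARXFR1} literally produces the \emph{product} $t^{(s_{n}-k/2)}(\cdot )\,t^{\prime (\frac{k-1}{2})}(\cdot )$, but the repair you propose cannot work: once the fused transfer matrices have been traded for eigenvalues, $\langle t|\mathsf{D}(\bar{\eta}_{n}^{(k)})|t^{\prime }\rangle $ is a fully determined scalar (the bordered determinant you already computed), so there are no residual $Q_{t^{\prime }}$-factors available to cancel, and the product and the ratio differ by $\bigl(t^{\prime (\frac{k-1}{2})}(\bar{\eta}_{n}^{(k/2)})\bigr)^{2}$, which no identity among $Q$-values can absorb without changing the matrix $\mathcal{S}^{(-,t,t^{\prime })}$ itself. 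The relations \rf{ARXFt-Q-relation-h}--\rf{ARXFt-Q-relation} and \rf{ARXFt-Qbar-relation-h}--\rf{ARXFt-Qbar-relation} constrain $Q_{t^{\prime }}$ and $\bar{Q}_{t}$ given the eigenvalues; they can at best re-express a fused eigenvalue at the string points $\bar{\eta}_{n}^{(k)}=\eta _{n}^{(2s_{n}-k)}$, never move it from a multiplicative position into a denominator. (Note the discrepancy is invisible at $s_{n}=1/2$, where the only term is $k=1$ and $t^{\prime (0)}\equiv 1$, which is why the spin-$1/2$ antiperiodic results of the earlier papers cannot arbitrate.) So the honest conclusion is: your derivation should stop at the bordered determinant with the coefficient that the sandwiching actually yields; the residual mismatch with the printed border row is a discrepancy between the paper's own reconstruction and its stated coefficient, not a gap that the proposed $Q$-function bookkeeping can close — and presenting that bookkeeping as the bridge would make the proof fail at exactly the point you flagged.
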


\begin{proof}
We can compute the action of $S_{n}^{-}$, by using the reconstruction:%
\begin{equation}
S_{n}^{-}=\prod_{k=1}^{n-1}\mathsf{\bar{T}}^{(s_{k})}(\eta
_{k})\sum_{k=1}^{2s_{n}}\mathsf{\bar{T}}^{(s_{n}-\frac{k}{2})}\left( \bar{%
\eta}_{n}^{\left( s_{n}+(k+1)/2\right) }\right) D(\bar{\eta}_{n}^{(k)})%
\mathsf{\bar{T}}^{(\frac{k-1}{2})}\left( \bar{\eta}_{n}^{(k/2)}\right)
\prod_{k=1}^{n}\left( \mathsf{\bar{T}}^{(s_{k})}(\eta _{k})\right) ^{-1},
\end{equation}%
so it holds:%
\begin{equation}
\langle t|S_{n}^{-}|t^{\prime }\rangle =\frac{\prod_{h=1}^{n-1}t^{(s_{h})}(%
\eta _{h})}{\prod_{h=1}^{n}t^{\prime (s_{n})}(\eta _{h})}\sum_{k=1}^{2s_{n}}%
\frac{t^{(s_{n}-\frac{k}{2})}\left( \bar{\eta}_{n}^{\left(
s_{n}+(k+1)/2\right) }\right) }{t^{\prime (\frac{k-1}{2})}\left( \bar{\eta}%
_{n}^{(k/2)}\right) }\langle t|D(\bar{\eta}_{n}^{(k)})|t^{\prime }\rangle .
\end{equation}%
Now from the right SOV representation, we have:%
\begin{eqnarray}
D(\bar{\eta}_{n}^{(k)})|t^{\prime }\rangle &=&\sum_{h_{1}=0}^{2s_{1}}\cdots
\sum_{h_{\mathsf{N}}=0}^{2s_{\mathsf{N}}}\prod_{a=1}^{\mathsf{N}}(\bar{\eta}%
_{n}^{(k)}-\eta _{a}^{(h_{a})})Q_{t^{\prime }}(\eta _{a}^{(h_{a})})\prod_{1\leq b<a\leq \mathsf{N}}(\eta _{a}^{(h_{a})}-\eta
_{b}^{(h_{b})})|h_{1},...,h_{\mathsf{N}}\rangle ,
\end{eqnarray}%
and we can rewrite the coefficient as:%
\begin{equation}
\prod_{a=1}^{\mathsf{N}}Q_{t^{\prime }}(\eta _{a}^{(h_{a})})V(\eta
_{1}^{(h_{1})},....,\eta _{\mathsf{N}}^{(h_{\mathsf{N}})},\bar{\eta}%
_{n}^{(k)}),
\end{equation}%
where $V()$ is the determinant of the $(\mathsf{N}+1)\times (\mathsf{N}+1)$
Vandermonde matrix $\left\Vert V_{i,j}\right\Vert $ defined by:%
\begin{equation}
V_{i,j}\equiv \left( \eta _{i}^{(h_{i})}\right) ^{j-1}\text{ \ \ }\forall
i\in \{1,...,\mathsf{N}\},\text{ }V_{\mathsf{N}+1,j}\equiv \left( \bar{\eta}%
_{n}^{(k)}\right) ^{j-1}\text{ \ }\forall j\in \{1,...,\mathsf{N}+1\}.
\end{equation}%
Now taking the scalar product and resumming we get the result.
\end{proof}

\begin{proposition}
\label{ARXFFF-Prop2}Let $\langle t| $ and $|t^{\prime }\rangle $ be two
eigenstates of the transfer matrix $\mathsf{\bar{T}}(\lambda )$, then it
holds:%
\begin{equation}
\langle t|S_{n}^{z}|t^{\prime }\rangle =\det_{\mathsf{N}+1}(||\mathcal{S}%
_{a,b}^{\left( z,t,t^{\prime }\right) }||)
\end{equation}%
where $||\mathcal{S}_{a,b}^{\left( z,t,t^{\prime }\right) }||$ is the $(%
\mathsf{N}+1)\times (\mathsf{N}+1)$ matrix:%
\begin{align}
\mathcal{S}_{a,b}^{\left( z,t,t^{\prime }\right) }& \equiv\mathcal{M}
_{a,b}^{\left( t,t^{\prime }\right) }\text{ \ \ \ \ \ \ \ \ \ \ \ \ \ \ \ \
\ \ \ \ \ \ \ \ \ \ \ \ \ \ \ \ \ \ \ \ \ \ \ \ \ \ \ \ \ \ for \ }a\in
\{1,...,\mathsf{N}\},\text{ \ \ \ }b\in \{1,...,\mathsf{N}\} \\
\mathcal{S}_{\mathsf{N}+1,b}^{\left( z,t,t^{\prime }\right) }& \equiv \frac{%
\prod_{h=1}^{n-1}t^{(s_{h})}(\eta _{h})}{\prod_{h=1}^{n}t^{\prime
(s_{n})}(\eta _{h})}\sum_{h=0}^{2s_{n}-1}\frac{t^{(h/2)}\left( \eta
_{n}^{\left( (h+1)/2\right) }\right) }{t^{\prime (s_{n}-\frac{(h+1)}{2}%
)}\left( \eta _{n}^{(h/2+s_{n})}\right) }\left( \eta _{n}^{(h)}\right) ^{b-1}%
\text{ \ \ \ for }b\in \{1,...,\mathsf{N}\} \\
\mathcal{S}_{a,\mathsf{N}+1}^{\left( z,t,t^{\prime }\right) }& \equiv
\sum_{h_{a}=0}^{2s_{a}}Q_{t^{\prime }}(\eta _{a}^{(h_{a})})\bar{Q}_{t}(\eta
_{a}^{(h_{a}-1)})d(\eta _{a}^{(h_{a})})\text{ \ for \ }a\in \{1,...,\mathsf{N%
}\}, \\
\mathcal{S}_{\mathsf{N}+1,\mathsf{N}+1}^{\left( z,t,t^{\prime }\right) }&
\equiv -s_{n}.
\end{align}
\end{proposition}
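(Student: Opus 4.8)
The plan is to follow the same strategy that proved Proposition~\ref{ARXFFF-Prop1}, but now applied to the reconstruction of $S_n^z$ given in $\left(\ref{ARXFR1}\right)$. First I would insert the reconstruction
\begin{equation}
S_n^z=\prod_{k=1}^{n-1}\mathsf{\bar{T}}^{(s_k)}(\eta_k)\sum_{k=1}^{2s_n}\mathsf{\bar{T}}^{(s_n-\frac{k}{2})}\!\left(\bar{\eta}_n^{(s_n+(k+1)/2)}\right)\,\text{$tr$}_0\!\left(S_0^z\mathsf{\bar{M}}_0^{(1/2)}(\eta_n^-+(k-s_n)\eta)\right)\mathsf{\bar{T}}^{(\frac{k-1}{2})}\!\left(\bar{\eta}_n^{(k/2)}\right)\prod_{k=1}^{n}\left(\mathsf{\bar{T}}^{(s_k)}(\eta_k)\right)^{-1}
\end{equation}
between $\langle t|$ and $|t'\rangle$ and let the fused transfer matrices act on their respective eigenstates, picking up the scalar eigenvalue prefactor $\prod_{h=1}^{n-1}t^{(s_h)}(\eta_h)/\prod_{h=1}^{n}t'^{(s_n)}(\eta_h)$ exactly as before. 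The essential difference from the $S_n^-$ case is that the diagonal generator $S_0^z$ in the basic auxiliary space produces $tr_0(S_0^z\mathsf{\bar{M}}_0^{(1/2)})=\tfrac12(\mathsf{A}-\mathsf{D})$ rather than the off-diagonal $\mathsf{B}+\mathsf{C}$; this explains why the matrix element decomposes into two contributions, one producing a $\mathcal{M}^{(t,t')}$-type block and one producing the extra column and row.

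Next I would evaluate the action of $\mathsf{A}$ and $\mathsf{D}$ on the right separate vector $|t'\rangle$ in the SOV representation. The $\mathsf{D}$-part acts diagonally via $\left(\ref{ARXFD-R-EigenV}\right)$, and following the computation in Proposition~\ref{ARXFFF-Prop1} one obtains a $(\mathsf{N}+1)\times(\mathsf{N}+1)$ Vandermonde-type structure in which the variable $\bar{\eta}_n^{(k)}$ plays the role of the extra node; summing over $k$ and taking the scalar product against $\langle t|$ reproduces the bordered-determinant form with the extra row $\mathcal{S}_{\mathsf{N}+1,b}^{(z,t,t')}$. The $\mathsf{A}$-part is the new ingredient: here I would use that $\mathsf{A}(\lambda)$ is fixed by the quantum determinant relation, together with the quantum-determinant identity $a(\lambda)d(\lambda-\eta)$, so that the action of $\mathsf{A}(\bar{\eta}_n^{(k)})$ on $|t'\rangle$ can be re-expressed through the shifted $Q_{t'}$-values using the $t$-$Q$ relations $\left(\ref{ARXFt-Q-relation-h}\right)$-$\left(\ref{ARXFt-Q-relation}\right)$. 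The combination of the $\mathsf{A}$- and $\mathsf{D}$-contributions, after using these Baxter relations to telescope the sums over $h_a$, should collapse into the extra column $\mathcal{S}_{a,\mathsf{N}+1}^{(z,t,t')}=\sum_{h_a}Q_{t'}(\eta_a^{(h_a)})\bar{Q}_t(\eta_a^{(h_a-1)})d(\eta_a^{(h_a)})$ and fix the corner entry $\mathcal{S}_{\mathsf{N}+1,\mathsf{N}+1}^{(z,t,t')}=-s_n$.

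The main obstacle I anticipate is precisely this last bookkeeping step: organizing the two separate operator contributions ($\mathsf{A}$ acting off-by-one in the SOV index, $\mathsf{D}$ acting diagonally) so that they assemble into a single bordered determinant. The diagonal $S_0^z=\tfrac12\mathrm{diag}(1,-1)$ in the $\tfrac12$ auxiliary space means the two terms come with opposite signs and shifted arguments, and the corner value $-s_n$ must emerge from the trace/boundary contribution $S_n^z$ evaluated against the reference weights; verifying that the sum over $h_a$ of the shifted products indeed closes (using the explicit zeros $a(\eta_n^{(2s_n)})=d(\eta_n^{(0)})=0$ noted in Theorem~\ref{ARXFC:T-eigenstates}) is where the delicate cancellation lives. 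Once the extra row, extra column, and corner entry are identified, the multilinearity of the determinant together with formula $\left(\ref{ARXFh|k}\right)$ for $\langle\mathbf{h}|\mathbf{k}\rangle$ expands the scalar product into the claimed $(\mathsf{N}+1)\times(\mathsf{N}+1)$ determinant, exactly as in the proof of Proposition~\ref{ARXFSP-det} and Proposition~\ref{ARXFFF-Prop1}.
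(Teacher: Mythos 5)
There is a genuine gap, and it occurs at the very first step: your trace identity is wrong. The reconstruction \rf{ARXFR1} involves the \emph{antiperiodic} monodromy matrix $\mathsf{\bar{M}}_{0}^{(1/2)}(\lambda )=\sigma _{0}^{x}\mathsf{M}_{0}^{(1/2)}(\lambda )$, so
\begin{equation*}
\text{tr}_{0}\left( S_{0}^{z}\,\mathsf{\bar{M}}_{0}^{(1/2)}(\lambda )\right) =\tfrac{1}{2}\,\text{tr}_{0}\left( \sigma ^{z}\sigma ^{x}\mathsf{M}_{0}^{(1/2)}(\lambda )\right) =\frac{\mathsf{C}(\lambda )-\mathsf{B}(\lambda )}{2},
\end{equation*}
and \emph{not} $\tfrac{1}{2}(\mathsf{A}-\mathsf{D})$: the $\sigma _{0}^{x}$ twist converts the diagonal generator into the off-diagonal entries of the monodromy matrix, exactly as it converts $S_{0}^{-}$ into $\mathsf{D}$ in Proposition \ref{ARXFFF-Prop1} (which is precisely why $\mathsf{D}(\bar{\eta}_{n}^{(k)})$ appears there). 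Consequently your entire second paragraph --- the diagonal action of $\mathsf{D}$ producing the extra Vandermonde node, and the treatment of $\mathsf{A}$ through the quantum determinant --- computes the matrix element of the wrong operator. Note moreover that $\mathsf{A}$ is the one generator whose SOV action is never given explicitly (it is only ``uniquely defined by the quantum determinant relation''), so even as a strategy the $\mathsf{A}$-route would be considerably harder than what is actually needed; the paper's proof never touches $\mathsf{A}$ at all.

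The second missing idea is the mechanism behind the corner entry $-s_{n}$, which you describe only as a ``trace/boundary contribution'' with a hoped-for telescoping. The actual argument writes $(\mathsf{C}-\mathsf{B})/2=\mathsf{C}-\mathsf{\bar{T}}/2$ and invokes the fusion identity
\begin{equation*}
\sum_{h=0}^{2s_{n}-1}\mathsf{\bar{T}}^{(h)}\left( \eta _{n}^{(\frac{h+1}{2})}\right) \mathsf{\bar{T}}\left( \eta _{n}^{(h)}\right) \mathsf{\bar{T}}^{(s_{n}-\frac{(h+1)}{2})}\left( \eta _{n}^{(h/2+s_{n})}\right) =2s_{n}\,\mathsf{\bar{T}}^{(s_{n})}\left( \eta _{n}\right) ,
\end{equation*}
proven on the eigenvalues as the analogue of formula (A.3) of \cite{ARXFCM07}; acting between eigenstates, this resums all the $\mathsf{\bar{T}}/2$ contributions into exactly $-s_{n}\langle t|t^{\prime }\rangle $, and after rewriting $\langle t|t^{\prime }\rangle $ as a bordered $(\mathsf{N}+1)$-determinant with last column $(0,\dots ,0,1)^{t}$, multilinearity in the last column delivers the corner $-s_{n}$. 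The remaining piece $\langle t|\mathsf{C}(\eta _{n}^{(h)})|t^{\prime }\rangle $ is then computed from the SOV shift action \rf{ARXFC-SOV_D-right}: $\mathsf{C}$ lowers one separate variable, so the coefficient carries the shifted product $Q_{t^{\prime }}(\eta _{a}^{(h_{a})})\bar{Q}_{t}(\eta _{a}^{(h_{a}-1)})d(\eta _{a}^{(h_{a})})$, which becomes the extra \emph{column} via the cofactor expansion of the $(\mathsf{N}+1)\times (\mathsf{N}+1)$ Vandermonde matrix, while the sum over $h$ of eigenvalue ratios times $(\eta _{n}^{(h)})^{b-1}$ yields the extra \emph{row}. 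Your assignment of the extra row to a ``$\mathsf{D}$-part'' and of the extra column plus corner to an ``$\mathsf{A}$-part'' therefore does not match the structure of the computation, and the ``delicate cancellation'' you flag cannot be carried out along the route you propose.
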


\begin{proof}
We can compute the action of $S_{n}^{z}$, by using the reconstruction:%
\begin{equation}
S_{n}^{z}=\prod_{k=1}^{n-1}\mathsf{\bar{T}}^{(s_{k})}(\eta
_{k})\sum_{h=0}^{2s_{n}-1}\mathsf{\bar{T}}^{(h/2)}(\eta _{n}^{(\frac{h+1}{2}%
)})\left( \frac{C(\eta _{n}^{(h)})-B(\eta _{n}^{(h)})}{2}\right) \mathsf{%
\bar{T}}^{(s_{n}-\frac{(h+1)}{2})}\left( \eta _{n}^{(h/2+s_{n})}\right)
\prod_{k=1}^{n}\left( \mathsf{\bar{T}}^{(s_{k})}(\eta _{k})\right) ^{-1},
\end{equation}%
so it holds:%
\begin{equation}
\langle t|S_{n}^{z}|t^{\prime }\rangle =\frac{\prod_{h=1}^{n-1}t^{(s_{h})}(%
\eta _{h})}{\prod_{h=1}^{n}t^{\prime (s_{n})}(\eta _{h})}%
\sum_{h=0}^{2s_{n}-1}\frac{t^{(h/2)}\left( \eta _{n}^{\left( (h+1)/2\right)
}\right) }{t^{\prime (s_{n}-\frac{(h+1)}{2})}\left( \eta
_{n}^{(h/2+s_{n})}\right) }\langle t|C(\eta _{n}^{(h)})|t^{\prime }\rangle
-s_{n}\langle t|t^{\prime }\rangle .
\end{equation}%
Here we have used the property:%
\begin{equation}
\sum_{h=0}^{2s_{n}-1}\mathsf{\bar{T}}^{(h)}\left( \eta _{n}^{(\frac{h+1}{2}%
)}\right) \mathsf{\bar{T}}\left( \eta _{n}^{(h)}\right) \mathsf{\bar{T}}%
^{(s_{n}-\frac{(h+1)}{2})}\left( \eta _{n}^{(h/2+s_{n})}\right) =2s_{n}%
\mathsf{\bar{T}}^{(s_{n})}\left( \eta _{n}\right) ,
\end{equation}%
which is proven showing that it holds for the eigenvalues following step by
step the proof given for the formula (A.3) in \cite{ARXFCM07}. Now from the
right SOV representation of $C(\eta _{n})$, we have:%
\begin{eqnarray}
C(\eta _{n}^{(h)})|t^{\prime }\rangle &=&\sum_{a=1}^{\mathsf{N}%
}\sum_{h_{1}=0}^{2s_{1}}\cdots \sum_{h_{\mathsf{N}}=0}^{2s_{\mathsf{N}%
}}\prod_{b=1}^{\mathsf{N}}Q_{t^{\prime }}(\eta _{b}^{(h_{b})})\prod_{b\neq
a,b=1}^{\mathsf{N}}\left[ \frac{\eta _{n}^{(h)}-\eta _{b}^{(h_{b})}}{\eta
_{a}^{(h_{a})}-\eta _{b}^{(h_{b})}}d(\eta _{a}^{(h_{a})})\right]  \notag \\
&&\times \prod_{1\leq b<a\leq \mathsf{N}}(\eta _{a}^{(h_{a})}-\eta
_{b}^{(h_{b})})|h_{1},...,h_{a}-1,...,h_{\mathsf{N}}\rangle ,
\end{eqnarray}%
and so we can write:%
\begin{eqnarray}
\langle t|C(\eta _{n}^{(h)})|t^{\prime }\rangle &=&\sum_{a=1}^{\mathsf{N}%
}(-1)^{\mathsf{N}+1+a}\sum_{h_{1}=0}^{2s_{1}}\cdots \sum_{h_{\mathsf{N}%
}=0}^{2s_{\mathsf{N}}}\widehat{V}_{a}(\eta _{1}^{(h_{1})},....,\eta _{%
\mathsf{N}}^{(h_{\mathsf{N}})},\eta _{n}^{(h)})  \notag \\
&&\times Q_{t^{\prime }}(\eta _{a}^{(h_{a})})\bar{Q}_{t}(\eta
_{a}^{(h_{a}-1)})d(\eta _{a}^{(h_{a})})\prod_{b\neq a,b=1}^{\mathsf{N}%
}Q_{t^{\prime }}(\eta _{b}^{(h_{b})})\bar{Q}_{t}(\eta _{b}^{(h_{b})}),
\end{eqnarray}%
where $(-1)^{\mathsf{N}+1+a}\widehat{V}_{a}(\eta _{1}^{(h_{1})},....,\eta _{%
\mathsf{N}}^{(h_{\mathsf{N}})},\eta _{n}^{(h)})$ is the cofactor $\left( a,%
\mathsf{N}+1\right) $ of the $(\mathsf{N}+1)\times (\mathsf{N}+1)$
Vandermonde matrix:%
\begin{equation}
V_{i,j}\equiv \left( \eta _{i}^{(h_{i})}\right) ^{j-1}\text{ \ \ }\forall
i\in \{1,...,\mathsf{N}\},\text{ }V_{\mathsf{N}+1,j}\equiv \left( \eta
_{n}^{(h)}\right) ^{j-1}\text{ \ }\forall j\in \{1,...,\mathsf{N}+1\}.
\end{equation}

It is trivial to remark that the above sum is the develop of the determinant
of a $(\mathsf{N}+1)\times (\mathsf{N}+1)$ matrix and then:%
\begin{equation}
\frac{\prod_{h=1}^{n-1}t^{(s_{h})}(\eta _{h})}{\prod_{h=1}^{n}t^{\prime
(s_{n})}(\eta _{h})}\sum_{h=0}^{2s_{n}-1}\frac{t^{(h/2)}\left( \eta
_{n}^{\left( (h+1)/2\right) }\right) }{t^{\prime (s_{n}-\frac{(h+1)}{2}%
)}\left( \eta _{n}^{(h/2+s_{n})}\right) }\langle t|C(\eta
_{n}^{(h)})|t^{\prime }\rangle =\det_{\mathsf{N}+1}||\text{\textsc{s}}%
_{a,b}^{\left( t,t^{\prime }\right) }||
\end{equation}%
with:%
\begin{align}
\text{\textsc{s}}_{a,b}^{\left( t,t^{\prime }\right) }& \equiv\mathcal{M}
_{a,b}^{\left( t,t^{\prime }\right) }\text{ \ \ \ \ \ \ \ \ \ \ for \ }a\in
\{1,...,\mathsf{N}\},\text{ \ \ \ }b\in \{1,...,\mathsf{N}\} \\
\text{\textsc{s}}_{\mathsf{N}+1,b}^{\left( t,t^{\prime }\right) }& \equiv 
\frac{\prod_{h=1}^{n-1}t^{(s_{h})}(\eta _{h})}{\prod_{h=1}^{n}t^{\prime
(s_{n})}(\eta _{h})}\sum_{h=0}^{2s_{n}-1}\frac{t^{(h/2)}\left( \eta
_{n}^{\left( (h+1)/2\right) }\right) }{t^{\prime (s_{n}-\frac{(h+1)}{2}%
)}\left( \eta _{n}^{(h/2+s_{n})}\right) }\left( \eta _{n}^{(h)}\right) ^{b-1}%
\text{ \ \ \ for }b\in \{1,...,\mathsf{N}\} \\
\text{\textsc{s}}_{a,\mathsf{N}+1}^{\left( t,t^{\prime }\right) }& \equiv
\sum_{h_{a}=0}^{2s_{a}}Q_{t^{\prime }}(\eta _{a}^{(h_{a})})\bar{Q}_{t}(\eta
_{a}^{(h_{a}-1)})d(\eta _{a}^{(h_{a})})\text{ \ for \ }a\in \{1,...,\mathsf{N%
}\}, \\
\text{\textsc{s}}_{\mathsf{N}+1,\mathsf{N}+1}^{\left( t,t^{\prime }\right)
}& \equiv 0.
\end{align}%
Then if we rewrite:%
\begin{equation}
\langle t|t^{\prime }\rangle =\det_{\mathsf{N}+1}||\text{\textsc{n}}%
_{a,b}^{\left( t,t^{\prime }\right) }||,
\end{equation}%
where we have defined the matrix:%
\begin{align}
\text{\textsc{n}}_{a,b}^{\left( t,t^{\prime }\right) }& \equiv \mathcal{M}
_{a,b}^{\left( t,t^{\prime }\right) }\text{ \ \ \ \ \ \ \ \ \ \ for \ }a\in
\{1,...,\mathsf{N}\},\text{ \ \ \ }b\in \{1,...,\mathsf{N}\} \\
\text{\textsc{n}}_{\mathsf{N}+1,b}^{\left( t,t^{\prime }\right) }& \equiv 
\frac{\prod_{h=1}^{n-1}t^{(s_{h})}(\eta _{h})}{\prod_{h=1}^{n}t^{\prime
(s_{n})}(\eta _{h})}\sum_{h=0}^{2s_{n}-1}\frac{t^{(h/2)}\left( \eta
_{n}^{\left( (h+1)/2\right) }\right) }{t^{\prime (s_{n}-\frac{(h+1)}{2}%
)}\left( \eta _{n}^{(h/2+s_{n})}\right) }\left( \eta _{n}^{(k)}\right) ^{b-1}%
\text{ \ \ \ for }b\in \{1,...,\mathsf{N}\} \\
\text{\textsc{n}}_{a,\mathsf{N}+1}^{\left( t,t^{\prime }\right) }& \equiv 0%
\text{ \ for \ }a\in \{1,...,\mathsf{N}\}, \\
\text{\textsc{n}}_{\mathsf{N}+1,\mathsf{N}+1}^{\left( t,t^{\prime }\right)
}& \equiv 1.
\end{align}%
we finally get our determinant formula.
\end{proof}

\section{Conclusion and outlook}

For the higher spin representations of the rational 6-vertex
Yang-Baxter algebra we have completely characterized the transfer matrix
spectrum by separation of variables and proven its simplicity. Moreover, we have provided the
SOV-reconstruction of all local operators and determinant
formulae for the scalar product of separate states. Finally, we were able to compute the form
factors of the local spin operators getting one determinant formulae similar
to those of the scalar products.

The relevance of these findings is clear as it represents the first
fundamental step to characterize the dynamics of this quantum model. Indeed,
the decomposition of the identity $\left( \ref{ARXFT-Id-decomp}\right) $ allows
to expand any m-point function:%
\begin{equation}
\frac{\langle t|X_{r_{1}}\cdots X_{r_{\text{g}+1}}|t\rangle }{\langle
t|t\rangle }=\sum_{t_{1}(\lambda ),...,t_{\text{g}}(\lambda )\in \Sigma _{%
\mathsf{\bar{T}}}}\frac{\langle t|X_{r_{1}}|t_{1}\rangle \langle t_{\text{g}%
}|X_{r_{\text{g}}}|t\rangle \prod_{j=2}^{\text{g}}\langle
t_{j-1}|X_{r_{j}}|t_{j}\rangle }{\langle t|t\rangle \prod_{j=1}^{\text{g}%
}\langle t_{j}|t_{j}\rangle }\text{\ \ with }X_{n}\in \text{End}%
(R_{n}^{(s_{n})}),
\end{equation}%
in terms of the form factors $\langle t|O_{n}|t^{\prime }\rangle $. Then
m-point functions can be potentially analyzed by extending and adapting to
the current model the tools that were developed\footnote{%
Note that physical observable as the dynamical structure factors \cite{ARXFBloch36}-\cite{ARXFBalescu75} were accessible by this numerical approach.} in 
\cite{ARXFCM05}-\cite{ARXFCCS07} for the spin 1/2 XXZ quantum chains in the ABA
framework.

We would like to underline that the results presented in this paper provide
a further confirmation of the universality in the characterization of form
factors of local operators of integrable quantum models when described by
our approach in SOV framework. In particular, the comparison of the current
results with those obtained previously by the same approach in \cite{ARXFGMN12-SG}, for the cyclic representations of the trigonometric 6-vertex
Yang-Baxter algebra, in \cite{ARXFN12-0}, for the spin-1/2 h.w. representations
of the same algebra, and in \cite{ARXFN12-2}, for the spin 1/2
representations of the 6-vertex trigonometric reflection algebra, makes
clear this universal picture. Indeed, a part from model dependent features,
like the nature of the spectrum of the quantum separate
variables and the SOV-reconstruction of local operators, the form factors
admit always the same type of determinant representations whose matrix
elements are \textquotedblleft convolutions\textquotedblright\ over
the spectrum of the separate variables of Baxter
equations solutions plus contributions coming from the local operators. The same type of statements will be
proven for the SOS model with antiperiodic boundary conditions and for the XYZ spin-1/2 quantum chain with periodic boundary conditions. In particular, in \cite{ARXFN12-3} the SOV
characterization of the spectrum and the determinant formulae for the scalar products of separate states will be derived allowing in \cite{ARXF?NT12} to get determinant
formulae for the form factors of local operators. It is worth recalling that
in the literature there exist previous results on matrix elements of
local operators which can be related to separation of variable
methods. Of special interest is the Smirnov's paper \cite{ARXFSm98} for the
quantum integrable Toda chain \cite{ARXFSk1} where the form factors of a
conjectured\footnote{%
Note that in Smirnov's paper is the lack of a direct SOV reconstruction of
local operators which forces the use of a conjecture on the basis
of local operators. A
reconstruction has been later presented in \cite{ARXFOB-04} not in the original
Sklyanin's quantum separate variables but it was defined by a change of variables leading to a new set of quantum separate variables.} basis of local operators are
derived in the Sklyanin's SOV framework by determinant formulae which respect
the universal picture above outlined. 

Finally, let us anticipate that the extension of the results derived in this
paper to the antiperiodic XXZ spin $s$ quantum chains is currently under
analysis. Note that while the SOV characterization of the spectrum
and the scalar product formulae can be re-derived trivially along the
same line described in this paper the main technical point to address
remains the solution of the inverse problem for the local spin generators in
a similar form to that presented in Section \ref{ARXFRecLSO}. Indeed, by using
these type of reconstruction the form factors of the local spin generators
will be derived following mainly the same steps presented in Section \ref{ARXFFF}%
. The relevance of this project is evidenced by remarking that the XXZ spin-$s$ quantum chain is strictly related to the lattice discretization of the
sine-Gordon model which is a fundamental prototype of integrable quantum
field theory and then the solution of the XXZ spin-$s$ quantum chain can
provide a further way to solve the sine-Gordon quantum field theory
according to the microscopic approach to integrable quantum field theories
(IQFTs) described in the introduction of \cite{ARXFGMN12-SG}. Our interest
toward integrable microscopic approach to IQFTs is due to need to define an
exact framework where to use the SOV reconstruction of local fields to
overcome the longstanding problem of their identifications\footnote{%
Let us point out that many results are known which confirm the
characterization of massive IQFTs as superrenormalizable perturbations by
relevant local fields \cite{ARXFZam88}-\cite{ARXFGM96} of conformal field theories 
\cite{ARXFVi70}-\cite{ARXFDFMS97}; see for example \cite{ARXFCM90}-\cite{ARXFJMT03} and the
series of works \cite{ARXFDN05-1}-\cite{ARXFDN08}, where the local field content of
massive theories has been classified by that of the associated ultraviolet
CFTs. However, these interesting results on the global operator structure of
massive IQFTs do not solve completely the problem of the
identification of specific local fields in the S-matrix formulation.} in
the S-matrix formulation\footnote{%
See \cite{ARXFA.Zam77}-\cite{ARXFK80} and \cite{ARXFM92} for a review and references
therein.}.

\bigskip 

\textbf{Acknowledgments}\thinspace\ I gratefully acknowledge the YITP Institute of Stony Brook for the opportunity
to develop my research programs under support of National Science Foundation grants PHY-0969739. I would like to thank N. Grosjean, N. Kitanine, K. K. Kozlowski, J.-M. Maillet, B. M. McCoy and V. Terras for their interest in this paper. I would like also to thank the Theoretical
Physics Group of the Laboratory of Physics at ENS-Lyon and the Mathematical
Physics Group at IMB of the Dijon University for their hospitality (under support ANR-10-BLAN-0120-04-DIADEMS). Finally, I would like to thank H. Frahm for pointing out me some important references on the spectrum of higher spin XXX quantum chains and more general related quantum models.
\bigskip

\begin{small}

\end{small}


\begin{thebibliography}{999}

\bibitem{ARXFGMN12-SG} N. Grosjean, J. M. Maillet, G. Niccoli, {\it On the form factors
of local operators in the lattice sine-Gordon model}, arXiv:1204.6307.
\vspace{-0.20cm}
\bibitem{ARXFN12-0} G. Niccoli, {\it Antiperiodic spin-1/2 $XXZ$ quantum chains by separation of variables: Form factors and complete spectrum}, arXiv:1205.4537.

\vspace{-0.20cm}

\bibitem{ARXFN12-2} G. Niccoli, {\it Non-diagonal open spin-1/2 $XXZ$ quantum chain by separation of variables: Complete spectrum and matrix elements of some
quasi-local operators}, arXiv:1206.0646.
\vspace{-0.20cm}

\bibitem{ARXFCM07} O. A. Castro-Alvaredo, J. M. Maillet, J. Phys. A \textbf{40} (2007) 7451.

\vspace{-0.20cm}

\bibitem{ARXFSF78}
E. K. Sklyanin and L. D. Faddeev, Sov. Phys. Dokl. \textbf{23} (1978) 902. 

\vspace{-0.20cm}

\bibitem{ARXFFST80} E. K. Sklyanin and L. A. Takhtajan, L.D. Faddeev, Theor. Math. Phys. \textbf{40} (1980) 688.

\vspace{-0.20cm}

\bibitem{ARXFFT79} L. D. Faddeev and L. A. Takhtajan, Russ. Math. Surveys, {\bf34} : 5 (1979) 11.

\vspace{-0.20cm}

\bibitem{ARXFS79} E. K. Sklyanin, Dokl. Akad. Nauk SSSR  {\bf 244} (1979) 1337; Soviet Physics Dokl. {\bf 24} (1979) 107.

\vspace{-0.20cm}

\bibitem{ARXFKS79} P. P. Kulish and E. K. Sklyanin, Phys. Lett. A  {\bf 70} (1979) 461.

\vspace{-0.20cm}

\bibitem{ARXFF80} L. D. Faddeev, Sov. Sci. Rev. Math. Cl  (1980) 107.

\vspace{-0.20cm}

\bibitem{ARXFS82} E. K. Sklyanin, J. Sov. Math. \textbf{19} (1982) 1546.

\vspace{-0.20cm}

\bibitem{ARXFF82} L. D. Faddeev, Les Houches lectures of 1982, Elsevier Sci. Publ. {\bf 563} (1984).

\vspace{-0.20cm}

\bibitem{ARXFF95} L. D. Faddeev, {\it How Algebraic Bethe Ansatz works for integrable model}, hep-th/9605187v1.

\vspace{-0.20cm}

\bibitem{ARXFJ90} M. Jimbo, Adv. Series in Math. Phys. {\bf 10}, Singapore: World Scientific, (1990).

\vspace{-0.20cm}

\bibitem{ARXFKS82} P. P. Kulish and E. K. Sklyanin, Lect. Notes Phys. \textbf{151} (1982) 61.

\vspace{-0.20cm}

\bibitem{ARXFSh85} B. S. Shastry, Lect. Notes Phys, \textbf{242} (1985).

\vspace{-0.20cm}

\bibitem{ARXFTh81} H. B. Thacker, Rev. Mod. Phys. \textbf{53} (1981) 253.

\vspace{-0.20cm}

\bibitem{ARXFIK82} A. G. Izergin and V. E. Korepin, Nucl. Phys. B \textbf{205} (1982) 401.
%spin-s

\vspace{-0.20cm}

\bibitem{ARXFZam80} A. B. Zamolodchikov and V. A. Fateev, Soy. J. Nucl. Phys. \textbf{32} (1980) 298.

\vspace{-0.20cm}

\bibitem{ARXFTak82} L. A. Takhtajan, Phys. Lett. A \textbf{87} (1982) 479.

\vspace{-0.20cm}

\bibitem{ARXFSg83} K. Sogo, Y. Akutsu and T. Abe, Prog. Theor. Phys. \textbf{70} (1983) 30.

\vspace{-0.20cm}

\bibitem{ARXFSg84} K. Sogo, Phys. Lett. A \textbf{104} (1984) 51.

\vspace{-0.20cm}

\bibitem{ARXFJm85} M. Jimbo, Lett. Math. Phys. \textbf{10} (1985) 63.

\vspace{-0.20cm}

\bibitem{ARXFBab83} H.M. Babujian, Nucl. Phys. B \textbf{215} (1983) 317.

\vspace{-0.20cm}

\bibitem{ARXFBab85} H. M. Babujian and A. M. Tsvelick, Nucl. Phys. B \textbf{265} (1985) 24.

\vspace{-0.20cm}

\bibitem{ARXFKRS} P. P. Kulish, N. Y. Reshetikhin and E. K. Sklyanin, Lett. Math. Phys. \textbf{5} (1981) 393.
\vspace{-0.20cm}

\bibitem{ARXFKR} A. N. Kirillov and N. Y. Reshetikhin, J. Phys. A: Math Gen. \textbf{20} (1987) 1565.

\vspace{-0.20cm}

\bibitem{ARXFAlz89} E. C. Alcaraz, M. N. Barber and M. T. Batchelor, Ann. Phys.
(N.Y.) $\mathbf{182}$ (1988) 280.

\vspace{-0.20cm}

\bibitem{ARXFBar90} D. Baranowski and V. Rittenberg, J. Phys. A \textbf{23%
} (1990) 1029.

\vspace{-0.20cm}

\bibitem{ARXFKlp91} A. Kl\"{u}mper, M. T. Batchelor and E. A. Pearce, J. Phys.
A \textbf{24} (1991) 3111.
%Spin S chain correlation functions XXX multiple integral

\vspace{-0.20cm}

\bibitem{ARXFK01} N. Kitanine, J. Phys. A: Math. Gen. \textbf{34} (2001) 8151.

\vspace{-0.20cm}
\bibitem{ARXFMRM05} C.S. Melo, G.A.P. Ribeiro and M.J. Martins, Nucl. Phys. B, 711, 565 (2005).

\vspace{-0.20cm}
\bibitem{ARXFGM05} W. Galleas and M.J. Martins, Phys. Lett. A, 335, 167 (2005).
\vspace{-0.20cm}

\bibitem{ARXFBaxBook} R. J. Baxter, {\it Exactly Solved Models in Statistical Mechanics}, Academic Press, New York U.S.A. (1982).

\vspace{-0.20cm}
\bibitem{ARXFRMG03} G. A. P. Ribeiro, M. J. Martins, W. Galleas, Nuclear Physics B {\bf675} (2003) 567.

\vspace{-0.20cm}
\bibitem{ARXFRM05} G.A.P. Ribeiro and M.J. Martins, Nucl. Phys. B, 705, 521 (2005).
%%%%%%%%%%%%
%%%%%%%%%%%%%%%%%%%%%
%SOV

\vspace{-0.20cm}

\bibitem{ARXFSk1} E. K.~Sklyanin, Lect.\ Notes Phys.\ \textbf{226} (1985) 196; E. K. Sklyanin J. Sov. Math. \textbf{31} (1985) 3417.

\vspace{-0.20cm}

\bibitem{ARXFSk2} E. K.~Sklyanin, \emph{Quantum inverse scattering method. Selected topics}. In: \emph{Quantum groups and quantum integrable systems}
(World Scientific, 1992) 63--97, arXiv:hep-th/9211111v1.

\vspace{-0.20cm}

\bibitem{ARXFSk3} E. K.~Sklyanin, Prog.\ Theor.\ Phys.\ Suppl. \textbf{118} (1995) 35.
\vspace{-0.20cm}

\bibitem{ARXFFSW08} H. Frahm, A. Seel, T. Wirth, Nucl. Phys. B {\bf 802} (2008) 351.


\vspace{-0.20cm}

\bibitem{ARXFFGSW11} H. Frahm, J. H. Grelik A. Seel, T. Wirth, J. Phys. A {\bf 44} (2011) 015001.
\vspace{-0.20cm}

\bibitem{ARXFFram+2} L. Amico, H. Frahm, A. Osterloh, and T. Wirth, Nucl. Phys. B 839 [FS] (2010) 604.
\vspace{-0.20cm}

\bibitem{ARXFFram+1} L. Amico, H. Frahm, A. Osterloh and G. A. P. Ribeiro, Nucl. Phys. B {\bf 787} (2007) 283.

\vspace{-0.20cm}

\bibitem{ARXFKitMT99} N.~Kitanine, J.~M. Maillet, and V.~Terras, Nucl. Phys. B \textbf{554} (1999) 647.

%Inverse Problem solution & 30

\vspace{-0.20cm}

\bibitem{ARXFMaiT00} J.~M. Maillet and V.~Terras, Nucl. Phys. B \textbf{575} (2000) 627.
\vspace{-0.20cm}

\bibitem{ARXFIKitMT99} A. G. Izergin, N.~Kitanine, J.~M. Maillet, and V.~Terras, Nucl. Phys. B \textbf{554} (1999) 679.

\vspace{-0.20cm}

\bibitem{ARXFKitMT00} N.~Kitanine, J.~M. Maillet, and V.~Terras, Nucl. Phys. B \textbf{567} (2000) 554.

\vspace{-0.20cm}

\bibitem{ARXFKitMST02a} N.~Kitanine, J.~M. Maillet, N.~A. Slavnov, and V.~Terras, Nucl. Phys. B \textbf{641} (2002) 487.

\vspace{-0.20cm}

\bibitem{ARXFKitMST02b} N.~Kitanine, J.~M. Maillet, N.~A. Slavnov, and V.~Terras, Nucl. Phys. B \textbf{642} (2002) 433.

\vspace{-0.20cm}

\bibitem{ARXFKitMST02c} N.~Kitanine, J.~M. Maillet, N.~A. Slavnov, and V.~Terras, J. Phys. A \textbf{35} (2002) L385.

\vspace{-0.20cm}

\bibitem{ARXFKitMST02d} N.~Kitanine, J.~M. Maillet, N.~A. Slavnov, and V.~Terras, J. Phys. A \textbf{35} (2002) L753.

\vspace{-0.20cm}

\bibitem{ARXFKitMST04a} N.~Kitanine, J.~M. Maillet, N.~A. Slavnov, and V.~Terras, Nucl. Phys. B \textbf{712} (2005) 600.

\vspace{-0.20cm}

\bibitem{ARXFKitMST04b} N.~Kitanine, J.~M. Maillet, N.~A. Slavnov, and V.~Terras, Nucl. Phys. B \textbf{729} (2005) 558.

\vspace{-0.20cm}

\bibitem{ARXFKitMST04c} N.~Kitanine, J.~M. Maillet, N.~A. Slavnov, and V.~Terras, J. Phys. A \textbf{38} (2005) 7441.

\vspace{-0.20cm}

\bibitem{ARXFKitMST05a} N.~Kitanine, J.~M. Maillet, N.~A. Slavnov, and V.~Terras, J. Stat. Mech. (2005) L09002.

\vspace{-0.20cm}

\bibitem{ARXFKitMST05b} N.~Kitanine, J.~M. Maillet, N.~A. Slavnov, and V.~Terras, \textit{On the algebraic Bethe Ansatz approach to the correlation
functions of the }$XXZ$\textit{\ spin-1/2 Heisenberg chain}, In \textit{Recent Progress in Solvable lattice Models}, RIMS Sciences Project Research
2004 on \textit{Method of Algebraic Analysis in Integrable Systems}, RIMS, Kyoto, Kokyuroku, \textbf{1480}, 14 (2006); hep-th/0505006.

\vspace{-0.20cm}

\bibitem{ARXFKKMST07} N. Kitanine, K. Kozlowski, J. M. Maillet, 
N. A. Slavnov, V. Terras, J. Stat. Mech. (2007) P01022.


\vspace{-0.20cm}

\bibitem{ARXFKKMNST07} N. Kitanine, K.K. Kozlowski, J.M. Maillet, G. Niccoli, N.A. Slavnov, V. Terras , J. Stat. Mech. (2007) P10009.

\vspace{-0.20cm}

\bibitem{ARXFK08} K. K. Kozlowski, J.Stat.Mech. (2008) P02006.

\vspace{-0.20cm}

\bibitem{ARXFKKMNST08} N. Kitanine, K. Kozlowski, J. M. Maillet, G. Niccoli, N. A. Slavnov, V. Terras, J. Stat. Mech. (2008) P07010.

\vspace{-0.20cm}

\bibitem{ARXFNT-10} G. Niccoli and J. Teschner, J. Stat. Mech. (2010) P09014.

\vspace{-0.20cm}

\bibitem{ARXFN-10} G. Niccoli, Nucl. Phys. B \textbf{835} (2010) 263.

\vspace{-0.20cm}

\bibitem{ARXFN-11} G. Niccoli, JHEP JHEP03(2011)123.
\vspace{-0.20cm}

\bibitem{ARXFGN12} N. Grosjean and G. Niccoli, {\it The $\tau _{2}$-model and the chiral Potts model revisited: Completeness of Bethe equations originated from Sklyanin SOV}, arXiv:1205.4614.
\vspace{-0.20cm}

\bibitem{ARXFBS90} V. V. Bazhanov, Y. G. Stroganov, J. Stat. Phys. \textbf{59} (1990) 799.

\vspace{-0.20cm}

\bibitem{ARXFBBP90} R. J. Baxter, V. V. Bazhanov and J. H. H. Perk, Int. J. Mod. Phys. \textbf{B4} (1990) 803.
\vspace{-0.20cm}

\bibitem{ARXFBa89} R. J. Baxter, J. Stat. Phys. \textbf{57} (1989) 1.

\vspace{-0.20cm}

\bibitem{ARXFAMcP0} G. Albertini, B. M. McCoy and J. H. H. Perk, Adv. Study in Pure Math. \textbf{19} (1989) 1.

\vspace{-0.20cm}

\bibitem{ARXFAMcP1} G. Albertini, B. M. McCoy and J. H. H. Perk, Phys. Lett. A \textbf{135} (1989) 159.

\vspace{-0.20cm}

\bibitem{ARXFAMcP2} G. Albertini, B. M. McCoy and J. H. H. Perk, Phys. Lett. A \textbf{139} (1989 ) 204.

\vspace{-0.20cm}

\bibitem{ARXFBaPauY} R. J. Baxter, J. H. H. Perk and H. Au-Yang, Phys. Lett. A \textbf{128} (1988) 138; H. Au-Yang and J. H. H. Perk, Adv. Studies Pure Math. {\bf 19} (1989).

\vspace{-0.20cm}

\bibitem{ARXFBa89-1} R. J. Baxter, Phys. Lett. A \textbf{133} (1989) 185.

\vspace{-0.20cm}

\bibitem{ARXFauYMcPTY} H. Au-Yang, B. M. McCoy, J. H. H. Perk, S. Tang, and M. Yan, Phys. Lett. A \textbf{123} (1987) 219.

\vspace{-0.20cm}

\bibitem{ARXFMcPTS} B. M. McCoy, J. H. H. Perk, S. Tang, and C. H. Sah, Phys. Lett. A \textbf{125} (1987) 9.

\vspace{-0.20cm}

\bibitem{ARXFauYMcPT} H. Au-Yang, B. M. McCoy, J. H. H. Perk, and S. Tang, Algebraic Analysis, Vol. 1, M. Kashiwara and T. Kawai, eds (Academic Press,
New York, 1988).

\vspace{-0.20cm}

\bibitem{ARXFTarasovSChP} V. O. Tarasov Phys. Lett. A \textbf{147} (1990) 487.
\vspace{-0.20cm}

\bibitem{ARXFGMN12-T2} N. Grosjean, J. M. Maillet, G. Niccoli, {\it On the matrix elements
of local operators in the $\tau _{2}$-model and the chiral Potts model}, to appear.

\vspace{-0.20cm}

\bibitem{ARXFGIPS06} G. von Gehlen, N. Iorgov, S. Pakuliak and V. Shadura, J. Phys. A: Math. Gen. \textbf{39} (2006) 7257.

\vspace{-0.20cm}

\bibitem{ARXFGIPST07} G. von Gehlen, N. Iorgov, S. Pakuliak, V. Shadura and Yu Tykhyy, J. Phys. A: Math. Theor. \textbf{40} (2007) 14117.

\vspace{-0.20cm}

\bibitem{ARXFGIPST08} G. von Gehlen, N. Iorgov, S. Pakuliak, V. Shadura and Yu Tykhyy, J. Phys. A: Math. Theor. \textbf{41} (2008) 095003.

\vspace{-0.20cm}

\bibitem{ARXFGIPS09} G. von Gehlen, N. Iorgov, S. Pakuliak, V. Shadura, J. Phys. A: Math. Theor. \textbf{42} (2009) 304026.
\vspace{-0.20cm}

\bibitem{ARXFBa04} R. J. Baxter, J. Stat. Phys. \textbf{117} (2004) 1.

\vspace{-0.20cm}

\bibitem{ARXFBBOY95} M. T. Batchelor, R. J. Baxter, M. J. O'Rourke, C. M. Yung, J. Phys. A \textbf{28} (1995) 2759.

\vspace{-0.20cm}

\bibitem{ARXFNWF09} S. Niekamp, T. Wirth, H. Frahm, J. Phys. A \textbf{42} (2009) 195008.
%%%%%%%%%%%%%%%%%%%%%%%%%%%%
\vspace{-0.20cm}

\bibitem{ARXFH28} W. Heisenberg, Z. Phys. {\bf49} (1928) 619.

\vspace{-0.20cm}

\bibitem{ARXFBe31} H. Bethe, Z. Phys. \textbf{71} (1931) 205.

\vspace{-0.20cm}

\bibitem{ARXFHul38} L. Hulthen, Ark. Mat. Astron. Fys. {\bf26} (1938) 1.

\vspace{-0.20cm}

\bibitem{ARXFOr58} R. Orbach,  Phys. Rev.  {\bf112} (1958) 309.

\vspace{-0.20cm}

\bibitem{ARXFW59} L. R. Walker, Phys. Rev. {\bf116} (1959) 1089.

\vspace{-0.20cm}

\bibitem{ARXFYY661}  C. N. Yang and C. P. Yang, Phys. Rev. {\bf150} (1966) 321.

\vspace{-0.20cm}

\bibitem{ARXFYY662}  C. N. Yang and  C. P. Yang, Phys. Rev. {\bf150} (1966) 327.

\vspace{-0.20cm}

\bibitem{ARXFGa83} M. Gaudin, La Fonction d'onde de Bethe Paris, Masson, 1983.

\vspace{-0.20cm}

\bibitem{ARXFLM66} E. H. Lieb and D. C. Mattis, Mathematical Physics in One Dimension, New-York Academic, 1966.
%%%%%%%%%%%%

\vspace{-0.20cm}

\bibitem{ARXFGau71} M. Gaudin, Phys.  Rev. A \textbf{4} (1971) 386.

\vspace{-0.20cm}

\bibitem{ARXFAlca87} F. C. Alcaraz, M. N. Barbe, M. T. Batchelor, R. J. Baxter and G. R. W. Quispel, J. Phys. A: Math. Gen. \textbf{20} (1987) 6397.

\vspace{-0.20cm}

\bibitem{ARXFSkly88} E. K. Sklyanin, J. Phys. A: Math. Gen. \textbf{21} (1988) 2375.

\vspace{-0.20cm}

\bibitem{} I. V. Cherednik, Theor. Math. Phys. \textbf{61} (1984) 977.

\vspace{-0.20cm}

\bibitem{} P. P. Kulish and S. K. Sklyanin,  J. Phys. A: Math. Gen. \textbf{24} (1991) L435.

\vspace{-0.20cm}

\bibitem{} L. Mezincescu and R. Nepomechie, Int. J. Mod. Phys. A \textbf{6} (1991) 5231.

\vspace{-0.20cm}

\bibitem{} P. P. Kulish and E. K. Sklyanin, J. Phys. A: Math. Gen. \textbf{25} (1992) 5963.

\vspace{-0.20cm}

\bibitem{} S. Ghoshal and A. Zamolodchikov, Int. J. Mod. Phys. A \textbf{9} (1994) 3841.

\vspace{-0.20cm}

\bibitem{ARXFGZ94} S. Ghoshal and A. Zamolodchikov, Int. J. Mod. Phys. A \textbf{9} (1994) 4353.

%%%%%%%%%%%%%
\vspace{-0.20cm}

\bibitem{ARXFABBBQ87} F. C. Alcaraz, M. N. Barber, M. T. Batchelor, R. J. Baxter and G. R. W. Quispel, J. Phys. A {\bf20} (1987) 6397.

\vspace{-0.20cm}

\bibitem{ARXFRe83-1} N. Y. Reshetikhin, Lett. Math. Phys. \textbf{7} (1983) 205.

\vspace{-0.20cm}

\bibitem{ARXFRe83-2} N. Y. Reshetikhin, Sov. Phys. JETP \textbf{57} (1983) 691.

\vspace{-0.20cm}

\bibitem{ARXFN12-3} G. Niccoli, {\it Antiperiodic dynamical 6-vertex and periodic 8-vertex models: Scalar products and complete spectrum by separation of
variables}, to appear.
%%%%%%%%
\vspace{-0.20cm}

\bibitem{ARXFIK81} A. G. Izergin and V. E. Korepin, Doklady Akademii Nauk {\bf259} (1981) 76.

\vspace{-0.20cm}

\bibitem{ARXFIK09} A. G. Izergin and V. E. Korepin, \textit{A lattice model
related to the nonlinear Schroedinger equation}, arXiv:0910.0295

\vspace{-0.20cm}

\bibitem{ARXFYB95} C. M. Yung, M. T. Batchelor, Nucl. Phys. B \textbf{446} (1995) 461.

%%%%%%%%%%% F-basis%%%%%%%%%%%%%%%%%
\vspace{-0.20cm}

\bibitem{ARXFMaiS00} J.~M. Maillet, and J. Sanchez de Santos, Amer. Math. Soc. Transl. \textbf{201} (2000) 137.

\vspace{-0.20cm}

\bibitem{ARXFDr1} V. G. Drinfel'd, Soviet Math. Dokl. \textbf{28} (1983) 667.

\vspace{-0.20cm}

\bibitem{ARXFDr2} V. G. Drinfel'd, \textit{Proceedings of the the International Congress of Mathematicians}, Berkeley, USA, 1986, 798-820, AMS, 1987.

\vspace{-0.20cm}

\bibitem{ARXFDr3} V. G. Drinfel'd, Leningrad Math. J. \textbf{1} (1990) 1419.

\vspace{-0.20cm}

\bibitem{ARXFT99} V. Terras, Lett. Math. Phys. \textbf{48} (1999) 263.
\vspace{-0.20cm}

\bibitem{ARXFBloch36} F. Bloch, Phys. Rev. {\bf50} (1936) 259.

\vspace{-0.20cm}

\bibitem{} J. S. Schwinger, Phys. Rev. {\bf51} (1937) 544.

\vspace{-0.20cm}

\bibitem{} O. Halpern and M. H. Johnson, Phys. Rev. {\bf55} (1938) 898.

\vspace{-0.20cm}

\bibitem{} L. Van Hove, Phys. Rev. {\bf95} (1954) 249.

\vspace{-0.20cm}

\bibitem{}  L. Van Hove, Phys. Rev. {\bf95} (1954) 1374.

\vspace{-0.20cm}

\bibitem{} Marshall W and Lovesey S W, {\it Theory of Thermal Neutron
Scattering}, Oxford, Academic (1971).

\vspace{-0.20cm}

\bibitem{ARXFBalescu75} R. Balescu, {\it Equilibrium and Nonequilibrium Statistical Mechanics}, New York: Wiley (1975).

\vspace{-0.20cm}

\bibitem{ARXFCM05} J.-S. Caux, J. M. Maillet, Phys. Rev. Lett. \textbf{95} (2005) 077201.

\vspace{-0.20cm}

\bibitem{ARXFCHM05} J.-S. Caux, R. Hagemans, J. M. Maillet, J. Stat. Mech. (2005) P09003.

\vspace{-0.20cm}

\bibitem{ARXFPSCHMWA06} R. G. Pereira, J. Sirker, J.-S. Caux, R. Hagemans, J. M. Maillet, S. R. White, I. Affleck, Phys. Rev. Lett. \textbf{96} (2006) 257202.

\vspace{-0.20cm}

\bibitem{ARXFHCM06} R. Hagemans, J.-S. Caux, J. M. Maillet, Proceedings of the "Tenth Training Course in the Physics of Correlated Electron Systems and
High-Tc Superconductors", Salerno, Oct 2005, AIP Conference Proceedings 846 (2006) 245.

\vspace{-0.20cm}

\bibitem{ARXFPSCHMWA07} R. G. Pereira, J. Sirker, J.-S. Caux, R. Hagemans, J. M. Maillet, S. R. White, I. Affleck, J. Stat. Mech. (2007) P08022.

\vspace{-0.20cm}

\bibitem{ARXFSPCHMWA07} J. Sirker, R. G. Pereira, J.-S. Caux, R. Hagemans, J. M. Maillet, S. R. White, I. Affleck, Proceedings SCES '07, Houston, Physica B 
\textbf{403} (2008) 1520.

\vspace{-0.20cm}

\bibitem{ARXFCCS07} J.-S. Caux, P. Calabrese and N. A. Slavnov, J. Stat. Mech. Theory Exp. (2007) P01008.
\vspace{-0.20cm}

\bibitem{ARXF?NT12} D. Levy-Bencheton, G. Niccoli and V. Terras, {\it Antiperiodic dynamical 6-vertex
and periodic 8-vertex models: Form factors by separation of variables}, to appear.

\vspace{-0.20cm}

\bibitem{ARXFSm98} F. Smirnov, J. Phys. A: Math. Gen. {\bf31} (1998) 8953.
%%%%%%%%%%%%%%%%%%%%%%
\vspace{-0.20cm}

\bibitem{ARXFOB-04} O. Babelon, J. Phys. A \textbf{37} (2004) 303.

\vspace{-0.20cm}

\bibitem{ARXFZam88} A. B. Zamolodchikov, Int. J. Mod. Phys. A \textbf{3} (1988) 743.

\vspace{-0.20cm}

\bibitem{ARXFZam89} A. B. Zamolodchikov, Adv. Stud. Pure Math. \textbf{19} (1989) 641.

\vspace{-0.20cm}

\bibitem{ARXFAlZam91} Al. B. Zamolodchikov, Nucl. Phys. B \textbf{348} (1991) 619.

\vspace{-0.20cm}

\bibitem{ARXFGM96} R. Guida, N. Magnoli, Nucl. Phys. B \textbf{471} (1996) 361.
\vspace{-0.20cm}

\bibitem{ARXFVi70} M. A. Virasoro, Phys. Rev. D \textbf{1} (1970) 2933.

\vspace{-0.20cm}

\bibitem{ARXFBPZ84} A. A. Belavin, A. M. Polyakov, A. B. Zamolodchikov, Nucl. Phys. B \textbf{241} (1984) 333.

\vspace{-0.20cm}

\bibitem{ARXFGin89} P. Ginsparg, \textquotedblleft Applied Conformal Field Theory\textquotedblright , in: Fields, Strings and Critical Phenomena, Les Houches Lecture Notes 1988, eds. E. Br\'{e}zin and J. Zinn-Justin, Elsevier, New York (1989).

\vspace{-0.20cm}

\bibitem{ARXFCa88} J. L. Cardy,{\it Conformal Invariance and Statistical Mechanics}, in Fields, Strings and Critical Phenomena, ed. E. Br\'{e}zin and J. Zinn-Justin, Les Houches (1988), Session XLIX North-Holland, Amsterdam (1990) 169.

\vspace{-0.20cm}

\bibitem{ARXFDFMS97} P. Di Francesco, P. Mathieu and D. S\'{e}n\'{e}chal, {\it Conformal Field Theory}, Springer, New York (1997).
\vspace{-0.20cm}

\bibitem{ARXFCM90} J. L. Cardy and G. Mussardo, Nucl. Phys. B \textbf{340} (1990) 387.

\vspace{-0.20cm}

\bibitem{ARXFSm92} F. A. Smirnov, {\it Form Factors in Completely Integrable Models of Quantum Field Theory}, World Scientific, (1992).

\vspace{-0.20cm}

\bibitem{ARXFKub95} A. Koubek, Nucl. Phys. B \textbf{435} (1995) 703.

\vspace{-0.20cm}

\bibitem{ARXFSm96} F. Smirnov, Nucl. Phys. B \textbf{453} (1995) 807.

\vspace{-0.20cm}

\bibitem{ARXFJMT03} M. Jimbo, T. Miwa, Y. Takeyama, {\it Counting minimal form factors of the restricted sine-Gordon model}, arXiv:math-ph/0303059v6.
\vspace{-0.20cm}

\bibitem{ARXFDN05-1} G. Delfino and G. Niccoli{, }Nucl.\ Phys.{\ }B{\textbf{707 }}(2005) 381.

\vspace{-0.20cm}

\bibitem{ARXFDN05-2} G. Delfino and G. Niccoli, J. Stat. Mech. (2005) P04004

\vspace{-0.20cm}

\bibitem{ARXFDN06} G. Delfino and G. Niccoli, JHEP 05 (2006) 035.

\vspace{-0.20cm}

\bibitem{ARXFDN08} G. Delfino and G. Niccoli, Nucl. Phys. B \textbf{799} (2008) 364.

\vspace{-0.20cm}

\bibitem{ARXFA.Zam77} A. B. Zamolodchikov, Pis. Zh. Eksp. Teor. Fiz. {\bf 25} (1977) 499; A. B. Zamolodchikov, Comm. Math. Phys. {\bf 55} (1977) 183 (Engl. transl.).

\vspace{-0.20cm}

\bibitem{ARXFZAZA79} A. B. Zamolodchikov, Al. B. Zamolodchikov, Ann. Phys. {\bf 120} (1979) 253.

\vspace{-0.20cm}

\bibitem{ARXFKT77} M. Karowski and H. J. Thun, Nucl. Phys. B \textbf{130} (1977) 295.

\vspace{-0.20cm}

\bibitem{ARXFK80} V. E. Korepin, Comm. Math. Phys. \textbf{76} (1980) 165.

\vspace{-0.20cm}

\bibitem{ARXFM92} G. Mussardo, Phys. Rep. \textbf{218} (1992) 215.
\end{thebibliography}
\end{document}